\newcommand{\be}{\begin{equation}}
\newcommand{\ee}{\end{equation}}
\newtheorem{theorem}{Theorem}[section]
\newtheorem{lemma}[theorem]{Lemma}
\newtheorem{corollary}[theorem]{Corollary}
\newtheorem{proposition}[theorem]{Proposition}
\newtheorem{remark}[theorem]{Remark}
\newenvironment{proof}{
    \noindent {\it Proof.}}{\hfill$\Box$
}
\begin{document}

\title{ \bf Bistable reaction-diffusion on a network}

\author{J.-G. Caputo$^1$, 
G. Cruz-Pacheco$^2$ and 
P. Panayotaros $^2$ \\
{ \small $^1$
Laboratoire de Mathematiques,
INSA de Rouen,} \\
{\small Av. de l'Universite,
76801 Saint-Etienne du Rouvray, France} \\
{\small $^2$
Depto. Matem\'{a}ticas y Mec\'{a}nica,
I.I.M.A.S.-U.N.A.M., } \\
{\small Apdo. Postal 20--726, 01000 M\'{e}xico D.F., M\'{e}xico } }

\maketitle

\date{\ }

\begin{abstract}
We study analytically and numerically a bistable reaction-diffusion
equation on an arbitrary finite network. We prove that stable fixed points 
(multi-fronts) exist for any configuration as long as the diffusion 
is small. We also study fold bifurcations leading to  
depinning and give a simple depinning criterion.
These results are confirmed
by using continuation techniques from bifurcation theory and by
solving the time dependent problem near the treshold.
A qualitative comparison principle 
is proved and verified for time dependent solutions, and for 
some related models. 
\end{abstract}

\section{Introduction}

Discrete reaction-diffusion equations arise in many different fields. For
example they
can describe the propagation of a nerve impulse in a neuron \cite{scott}
or the motion of a dislocation \cite{nabarro}. The solutions of
these equations are typically fronts connecting two regions of constant
value, say 0 and 1.  Front pinning and propagation has been studied by 
many authors for 
a one dimensional network for a bistable cubic reaction term. An important 
result obtained by Keener\cite{keener87} is that when the Laplacian is weak,
any arbitrary configuration of 0's and 1's leads to a stable static solution.
The study was extended by Erneux and Nicolis\cite{EN93} who explicitely 
calculated these fronts and gave a pinning criterion. For material
science applications and in the presence of an external forcing, Carpio 
and Bonilla \cite{CB03} gave pinning conditions and estimated the front
speed. For a two dimensional regular lattice, front propagation
was studied by Hoffman and Mallet-Paret\cite{HM10}.

The present article considers arbitrary but finite networks, where to our 
knowledge there are no works. 
We address specifically this problem and study analytically
and numerically static fronts and how they destabilize in  
an arbitrary finite network (graph). The reaction term we use 
is the bistable cubic nonlinearity and the diffusion term
is the standard graph Laplacian of the network (see e.g. \cite{crs01}). 
Throughout the article, we refer to this equation as the Zeldovich model.
We introduce and motivate the bistable reaction-diffusion 
system by considering how an epidemic propagates on a network. 
To describe how the epidemic front moves on the network, we extend
the standard Kermack-McKendrick model (see e.g. \cite{CDEMLPAV09} for a 
recent application) to a network and show how it reduces to a 
discrete Fisher equation. In contrast to the ODE model, the network 
Kermack-McKendrick model
is not commonly used to describe the spread of an epidemic.
The Fisher model only describes the propagation phase.  
The related Zeldovich model we propose is also new
but its cubic bistable nonlinearity 
has a local excitation threshold, which may be a desirable feature  
for both geographic networks, where the epidemic spreads 
from one location to another, and 
agent-based networks, where the disease spreads from one individual
to another.

A first result is the existence of static stable fronts 
for small diffusivity. 
The argument combines the implicit 
function theorem (as in the anticontinuous limit used for
other lattice problems, see \cite{MA94}) with small
diffusivity asymptotics for the front amplitudes.
The proof also uses a suitable definition for the interface
between the active and quiescent sites. 
The statement is analogous to Keener's result for the integer
lattice \cite{keener87}.
We also show that for large diffusivity the only static 
solutions are spatially homogeneous.

The existence of these fronts depends on the diffusivity, 
the nonlinearity, and the local excitation threshold parameters of the model.
We focus on the dependence of the static fronts on the 
diffusivity using numerical continuation techniques. 
The continuation exhibits the fold structure seen in 
one dimensional studies \cite{EN93}.
For general networks the depinning diffusivity threshold depends on the
front configuration, and a static configuration that becomes unstable 
can be pinned elsewhere. We compute numerically the
depinning thresholds for different static solutions and show that
they can be predicted accurately by a simple heuristic expression derived 
for small diffusivity.
By solving the time dependant problem, we verify these findings
and see how the connectivity of the network
affects the propagation of the fronts above the threshold.

We also obtain qualitative comparison results between different solutions
of the Zeldovich equation, showing in particular that "large" fronts
involving large regions of 1's dominate "small" fronts.
Our study also contains comparison results showing that the Fisher 
equation describes faster front propagations than 
both the Zeldovich and Kermack-McKendrick equations. These results 
are also verified numerically. We see also that the Fisher 
and Kermack-McKendrick fronts propagate at comparable speeds and 
are much faster that the Zeldovich fronts. 
Finally we present numerical results for larger local escitation 
threshold
parameters, showing that the static fronts become wider and travel
much faster accross the network when they destabilize. \\
The article is organized as follows. In section 2 we introduce
the Zeldovich equation and discuss the other models. 
Section 3 studies the fixed points of the
Zeldovich equation, presenting theretical and numerical continuation results, 
as well as a depinning criterion.
Section 4 describes comparison results between the solutions of
the Zeldovich equation, and between solutions of the 
Zeldovich, Fisher, and Kermack-McKendrick equations 
Section 5 presents numerical
results of the evolution problem; there we validate the pinning
thershold for different fronts and compare the dynamics of
large and small fronts. We also show that fronts become wider
as the nonlinearity treshold increases and we compute the pinning
treshold. Conclusions are given in section 6.


\section{The Zeldovich model and epidemic propagation}

One of the main models to describe the time evolution of the
outbreak of an epidemic is the Kermack-McKendrick model\cite{km27}
\begin{eqnarray} \label{km}
S_t = -\alpha S I,\\
I_t = \alpha S I -\beta I,\\
R_t = \beta I,
\end{eqnarray} 
where $S,I,R$ are respectively the number of people susceptible to be infected,
the number of infected and the number of recovered in a total 
constant population $N$.
We have of course
$$S+I+R=N  . $$
The dynamics of the model is that 
$ I_t>0 $  (resp. $ I_t<0 $) if $S > \beta / \alpha$ 
(resp. $S < \beta / \alpha$) .
We also can compute the "final" state of $S$ after the outbreak
$$ S(t) = s(0) \exp \left ( -\alpha \int_0^t I(t') dt' \right ) ~~. $$
 
Roughly speaking, assuming that $I(0)$ is near zero, and 
$S(0) > \beta / \alpha $, the infected population 
$I(t)$ increases, reaches a maximum value and decreases to zero. The 
main questions are 
that maximum value of $I$, the time to reach its, the integral of $I$, etc.

We rescale the variables by $N$ 
$$s = S/N,~i=I/N,~r=R/N  ~ .$$
This yields the system
\begin{eqnarray} \label{km2}
s_t = -\alpha N s i,\\
i_t = \alpha N s i -\beta i,\\
r_t = \beta i ~~.
\end{eqnarray}
We introduce now the possibility of dispersion from city to city with a 
Laplacian term. The system (\ref{km2}) becomes 
\begin{eqnarray} \label{km3}
s_t = \epsilon \Delta s -\gamma s i, \notag \\
i_t = \epsilon \Delta i  +\gamma s i -\beta i, 
\end{eqnarray}
where $\gamma = \alpha N$ and 
the third equation is omitted because of the conservation 
\be\label{sri} s+r+i=1 .  \ee
This model will describe the outbreak of the epidemic, its spreading, 
and eventual
demise as $i$ peaks and starts decreasing at each site.

To simplify even more the model
and get analytical results we only consider the maximum outbreak by
eliminating the $\beta$ term  and only considering the equation
for $i$. If $\beta=0$ then $s+i$ verifies
$$ (s+i)_t = \epsilon \Delta (s+i)$$
so that $s+i$ goes to a constant which we can assume to be 1. 
Then from (\ref{km3}) for $x = i$, we get the Fisher equation 
\be\label{fisher}
x_t = \epsilon \Delta x  + \gamma (1-x) x, ~\ee
This equation has two homogeneous solutions $x^*=0,1$ and the former
is unstable. The model does not have a treshold as opposed to
the Kermack-McKendrick.
To re-introduce this important feature, we modify the nonlinearity 
into the cubic (Zeldovich) so that we get
\be\label{zeldovich}
x_t = \epsilon \Delta x  + \gamma  (1-x) x (x-a).~\ee
For this, there are only two stable homogeneous solutions
$x^* =0,1$. As discussed in the introduction, this equation has
many physical applications; it is then an important physical model.

If we had a spatially uniform domain 
the term $\Delta$ would be the usual Laplacian. Here we consider an
arbitrary graph, for example the network of six major cities in
Mexico shown in Fig. \ref{g6}. 
Here the nodes correspond to the cities and 
the links correspond to the main roads connecting these cities. 
\begin{figure} [H]
\centerline{\epsfig{file=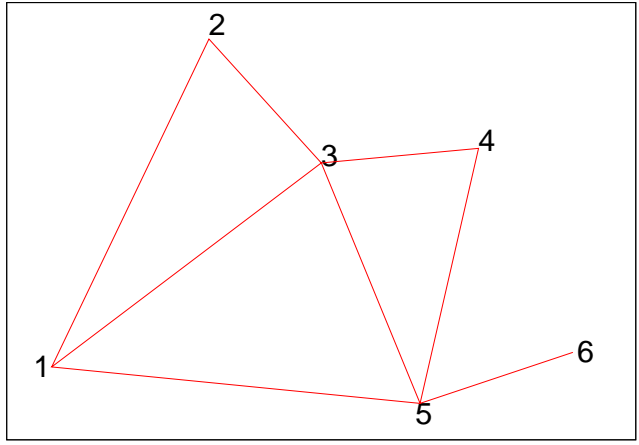,width=0.8\linewidth,angle=0}}
\caption{ Graph of the six main cities in Mexico numbered from 1 to 6:
Guadalajara, Zacatecas, Queretaro, Pachuca,  Mexico City, Puebla.  
The links represent the main roads connecting these cities.
}
\label{g6}
\end{figure}
For this particular example, the term $\Delta$ is 
\be\label{lap_6}
\epsilon \Delta \equiv \epsilon \left(
\begin{array}{cccccc}
-3& 1  & 1 &  0&  1& 0  \\
1 & -2 & 1 &  0&  0& 0 \\
1 & 1  & -4&  1&  1& 0 \\
0 & 0  & 1 & -2&  1& 0 \\
1 & 0  & 1 &  1& -4& 1 \\
0 & 0  & 0 &  0&  1& -1
\end{array}
\right).
\ee
Note that the graph Laplacian $\Delta$ is a non-negative symmetric matrix 
\cite{crs01}.  We use this property below. 
In physical units the parameter $\epsilon$ is
\be\label{epsilon}
\epsilon = \frac{D}{h^2 },
\ee
where $D$ is a diffusion coefficient and $h$ is a typical distance
between cities. The typical time for the diffusion is then
\be\label{teps}
t = \frac{1}{\epsilon}= \frac{h^2}{D}.\ee
At this time we assumed the same diffusion coefficient (weight) for all the
links of the network. If a node is more or less remote from its 
neighbors than the other nodes, then one could modify the weight 
accordingly. With this generalization, we would still have a positive
symmetric graph Laplacian.

Let $\tau$ be the triangle $\{ (s, i) \in [0,1]^2,~~  s + i \leq 1 \}$. 
We have the following result.
\begin{lemma}
\label{invariance-of-unit-cube}
The unit cube $[0,1]^N$ 
is invariant under the evolution of the Zeldovich 
(\ref{zeldovich}) and Fisher equations in ${\bf R}^N$.
The product of the triangles   $\tau^N $ is 
invariant under the Kermack-McKendrick (\ref{km3}) system
with in ${\bf R}^{2N}$.
\end{lemma}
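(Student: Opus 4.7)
The plan is to verify, for each of the three systems, that the vector field is inward-pointing (or tangent) on the boundary of the relevant closed set, and then to invoke Nagumo's flow-invariance theorem. The key structural observation is that the nonlinear reaction terms all vanish precisely on the coordinate faces defining the boundary, so only the graph Laplacian contributions need to be analyzed there. Throughout, I will use the representation $(\Delta x)_j = \sum_k A_{jk}(x_k - x_j)$, where $A$ is the adjacency matrix.

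For Zeldovich on $[0,1]^N$, I would fix $x \in [0,1]^N$ and examine the two types of boundary faces. If $x_j = 0$, then $\gamma(1-x_j)x_j(x_j-a) = 0$ and the right-hand side reduces to $\epsilon \sum_k A_{jk} x_k \geq 0$, since $x_k \geq 0$; thus $\dot x_j \geq 0$. If $x_j = 1$, the nonlinearity again vanishes and the Laplacian term is $\epsilon \sum_k A_{jk}(x_k - 1) \leq 0$, since $x_k \leq 1$. The Fisher case is essentially identical because $\gamma x(1-x)$ also vanishes at $0$ and $1$.

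For the Kermack--McKendrick system on $\tau^N$, the triangle $\tau$ has three edges, giving three boundary types per node. On $\{s_j = 0\}$: $\dot s_j = \epsilon \sum_k A_{jk} s_k \geq 0$. On $\{i_j = 0\}$: $\dot i_j = \epsilon \sum_k A_{jk} i_k \geq 0$. On $\{s_j + i_j = 1\}$ (i.e.\ $r_j = 0$), adding the first two equations gives
\[
(s_j+i_j)_t = \epsilon \sum_k A_{jk}\bigl((s_k+i_k)-1\bigr) - \beta i_j \leq 0,
\]
since $s_k + i_k \leq 1$ and $i_j \geq 0$ throughout $\tau^N$. Hence the vector field points into (or along) $\tau^N$ on every boundary face, and invariance follows.

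The only real subtlety is that the boundary inequalities are non-strict, so a naive first-exit-time argument only gives $\dot x_j(t^*) \geq 0$ at the putative exit time rather than the strict $>0$ that would yield an immediate contradiction. I would resolve this either by appealing to Nagumo's theorem directly, or by the standard perturbation trick of replacing $f$ by $f + \delta(c - x)$ for an interior point $c$: the perturbed flow is strictly inward on the boundary, hence trajectories stay in the closed set, and letting $\delta \to 0$ gives the result by continuous dependence. A more elementary alternative, available here because $f_j$ factors nicely on the faces, is the differential inequality $\dot x_j \geq -M x_j$ valid on $[0,1]^N$, which yields $x_j(t) \geq x_j(0)e^{-Mt} \geq 0$ by Gr\"onwall, with a symmetric argument for the upper bound.
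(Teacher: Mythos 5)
Your proof is correct, but it takes a genuinely different route from the paper. The paper does not argue on the boundary at all: it obtains Lemma \ref{invariance-of-unit-cube} as a special case of the comparison Propositions \ref{compare-Zel-Zel} and \ref{compare-Fis-KmK}, using the fact that ${\bf 0}$ and ${\bf 1}$ are static solutions of the Zeldovich and Fisher equations, so any trajectory ordered between them at $t=0$ stays ordered for all time; those propositions are in turn proved by showing that the explicit Euler map preserves the partial order (and preserves $[0,1]^N$ and $\tau^N$) for a sufficiently small step size independent of the data, and then passing to the limit $M\to\infty$ via the convergence of the Euler method. Your argument is the direct one the paper only alludes to in passing (``it is also easy to show that the corresponding vector fields \ldots point inwards at the boundaries''): you verify the subtangentiality condition face by face, exploiting that the reaction terms vanish on the relevant faces so that only the sign of $\epsilon\sum_k A_{jk}(x_k-x_j)$ matters, and you correctly identify and resolve the non-strict-inequality issue via Nagumo, an inward perturbation, or a Gr\"onwall bound up to the first exit time. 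Your approach is more elementary and self-contained --- it proves invariance without first developing the order-preservation machinery, and avoids the mild awkwardness that Proposition \ref{compare-Zel-Zel} is stated for solutions already assumed to take values in $[0,1]^N$ with strict interior initial data. What the paper's approach buys is economy: the Euler-discretization lemmas are needed anyway for the comparison principle, which is the main result of Section 4, and invariance then comes for free, together with the quantitative information that the admissible step size depends only on the graph and the nonlinearities.
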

The lemma follows from Propositions
\ref{compare-Zel-Zel}, \ref{compare-Fis-KmK}
in section 4 below 
(these do not use any of the results of section 3).
It is also easy to show that  
the corresponding vector fields at the 
point inwards at the boundaries.

\section{Fixed points of the Zeldovich model}

We want to describe a situation where only some nodes are excited; in the epidemic context, it means that some nodes are infected
and the rest are susceptible. Only the Zeldovich model
(\ref{zeldovich}) has such stable fixed points; these are generalized 
static ``fronts'' 
where some nodes are close to one and the rest close to zero.
Therefore, in this section, we concentrate on the fixed points of the 
Zeldovich model (\ref{zeldovich}). We will clarify the situation for
the Fisher model (\ref{fisher}) below and show why it is less
interesting. For definiteness, throughout this section, we consider 
the 6 node graph from Fig. \ref{g6}; it is clear that the results
can be extended to an arbitrary finite graph.

The fixed point equation we solve is 
\be 
\label{stat-zel} 
F(x,\epsilon) = 0, 
\quad x = [x_1, \ldots, x_n]^T, \ee 
where 
\be 
F_k(x,\epsilon) =  \epsilon (\Delta x)_k  + f_k(x), \quad\hbox{with} 
\ee
\be
\label{fp-zel}
f_k(x) = \gamma  (1-x_k) x_k (x_k-a), 
\quad k = 1, \ldots, n, 
\ee
$\Delta$ is the graph Laplacian of (\ref{lap_6}), 
and $0 < a < 1 $, $\gamma = 1$. 
We will examine how
the fixed points depend on the coupling parameter $\epsilon \geq 0$.

For $\epsilon =  0$, and every partition of the 
set of nodes into three subsets $S_0$, $S_a$, $S_1$ 
we have a solution of $F(x,0)= 0$ of the form   
$x_j = 0$, if $j \in S_0$, 
$x_j = a$, if $j \in S_a$,
$x_j = 1$, if $j \in S_1$. 
Clearly, these are the only solutions of $F(x,0) = 0$.
An inspection of the Jacobian reveals that when $S_a$ is empty,  
these solutions are stable. 
On the other hand if $S_a$ is nonempty these solutions are unstable. 
The number of unstable direction is the number of sites in $S_a$.
The solutions where $ S_a $ is empty are generalizations of 
the fronts that exist for the one dimensional case, they are the
main subject of interest of the article.

\subsection{Homogeneous fixed points}

Let us now consider the case $\epsilon >0$. The homogeneous
fixed points can be analyzed for arbitrary
$\epsilon$. For that consider the system linearized around the
fixed point $x^*$
\be\label{lin}
{v}_t = [\epsilon \Delta   + Df(x^*)] v ,~\ee
where the Jacobian matrix has elements
\be\label{jacob}
Df(x^*)= \delta_{k,m}  \gamma (2 (1 + a) x^*_k - 3 {x^*_k}^2 - a).\ee
When the fixed points are homogeneous, $DN$ has a very simple
form, it can be written 
$$Df = -\gamma a I,~Df = \gamma (a-1) I,~ Df = \gamma a(1-a) I$$
respectively for $x^*=[0,\ldots,0]^T  ~x^*= [1,\ldots,1]^T,
i^*=[a, \ldots, a]^T$, where $I$ is the $N \times N$ identity.
The matrix $Df$ is then $c I $ for some real constant $c$, 
and $\sigma(\epsilon \Delta + Df(x^*))$ is $\sigma(\epsilon  \Delta) + c$. 
To study the stability it is then
convenient to use the basis of orthogonal eigenvectors 
of the symmetric matrix $\Delta$ \cite{crs01}
$$ \Delta V^k = -\omega_k^2 V^k ,$$
where the eigenfrequencies $\omega_k$ verify
$$\omega_1=0 \le \omega_2 \le \dots \le \omega_n  . $$
We write
\be\label{i_exp}
i = \alpha_1 V^1 + \alpha_2 V^2 \dots + \alpha_n V^n .\ee
Plugging the above expression into (\ref{jacob}) we get the
evolution of the amplitude 
\be\label{alpha_t}
{\dot \alpha_k} = -[ \epsilon \omega_k^2 + a ]{\alpha_k}
\ee
for the fixed point $x^*=[0,\ldots,0]^T$. Clearly it is stable for any
$\epsilon$. In a similar way we can show that 
$x^*=[1, \ldots,1]^T $ is always stable. The fixed point 
$x^*=[a, \ldots, a]^T$ 
is always unstable since 
we have an eigenvalue $- \epsilon \omega_1^2 + \gamma a(1 - a) > 0$.

\subsection{Non homogeneous fixed points}

For the non homogeneous fixed points the analysis
is not so simple. Let us first consider the case 
$\epsilon >0$ but small.
The implicit value theorem implies that each 
solution $x_0$ of $F(x,0) = 0$ can be continued uniquely, that is, it  
belongs to a unique  smooth one-parameter 
family of $x(\epsilon)$ satisfying $F(x(\epsilon),\epsilon) = 0$, 
$x(0) = x_0$, 
provided that $|\epsilon|$ is sufficiently small, see e.g. \cite{Z86}. 
The solution $x(\epsilon)$ of the local branch passing from 
$x(0)$ has the same stability as $x(0)$, for  
$|\epsilon|$ sufficiently small. 
This follows from the fact that all the solutions $x(0)$
are hyperbolic.

The numerical solutions below were obtained 
using the minpack implementation of Powell's hybrid 
Newton method \cite{P70}.
We start from $\epsilon = 0$, solving (\ref{stat-zel}) using Newton's method
and step in $\epsilon$. 
After some $\epsilon$, we continue stepping but use the pseudo-arc as 
a parameter\cite{K77} because we anticipate a fold. 
The linear stability of a solution $x(\epsilon_0)$
is computed readily by examining the eigenvalues of 
$D_1 F(x,\epsilon)$ at $x(\epsilon_0)$, $\epsilon_0$, i.e.  
\be 
(D_1 F(x,\epsilon))_{n,m} = \epsilon \Delta_{n,m} + 
\delta_{n,m}  \gamma  (2 (1 + a) x_n - 3 x_n^2 - a).   
\ee

We see numerically that all solutions
of $F(x,\epsilon) = 0$ with $\epsilon > 0 $ satisfy 
$x_j \in (0,1)$, forall $j \in \{1, \ldots, 6\}$.
This is also shown in Corollary \ref{physical-solutions} below. 
As we increase the value of $\epsilon$ along a branch of 
solutions continued from an $\epsilon = 0$ solution $x_0$, 
the linear stability remains unchanged, until some 
$\epsilon_0$, depending on the branch, where we see a fold. 
The branch is then continued by decreasing $\epsilon$, until we 
reach a different solution ${\tilde x}(0)$ of the $\epsilon = 0$
problem. After the fold the number of stable and 
stable eigenvalues changes. We observe that when $x(0)$
is stable, the branch changes stability 
at the fold, and ${\tilde x}(0)$ is unstable.
For example, setting $a = 0.1$, we see 
that the unstable $\epsilon = 0 $ solution 
$[1, 1, 1, 0, 0.1, 0]^T$ is connected to the stable  
$\epsilon = 0 $ solution $[1, 1, 1, 0, 0, 0]^T$
by a branch that has a fold at $ \epsilon = 0.00131035764$. 
In Fig. \ref{conti} we show the value of the component 
$x_5$ at different values of $\epsilon$ of the fixed point.
The other components start, and finish at the same values.

A similar behavior was observed for all the examples examined, 
except the spatially homogeneous solutions 
$c[1,\ldots,1]^T $ with $c = 1$, $a$, or $0$. From  relation
(\ref{fp-zel}) one can see that these exist for all $\epsilon$.
Based on our numerical observations
we conjecture that all 
$ 3^6 - 3$ inhomogeneous fixed points of the $ \epsilon = 0 $ problem
(we exclude the spatially homogeneous solutions)
belong to branches undergoing a fold bifurcation at some 
positive value of $\epsilon$, i.e. we have 
$ (3^6 - 3) /2$ branches with folds, connecting pairs of 
$\epsilon = 0 $ solutions.  
This conjecture can be checked numerically by continuing 
all $\epsilon = 0 $ fixed points. From the theoretical point of
view we can also show that non-spatially homogeneous fixed points 
cannot exist for arbitrarily large $\epsilon$.
We have

\begin{proposition}
\label{large-epsilon-equilibria}
There is an $\epsilon_c > 0$, such that all 
$(x,\epsilon)$, $x \in I^N$, $\epsilon > \epsilon_c$
that satisfy $F(x,\epsilon) = 0$ are of the form 
$x = c[1,\ldots,1]^T$, with $c = 0$, $a$, or $1$.
\end{proposition}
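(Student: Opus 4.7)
The plan is to exploit the spectral structure of the graph Laplacian by decomposing each candidate fixed point as $x = c\mathbf{1} + y$, where $\mathbf{1} = [1,\ldots,1]^T$, $c = (1/N)\sum_k x_k$, and $y \in \mathbf{1}^\perp$. On a connected graph one has $\ker(-\Delta) = \mathrm{span}(\mathbf{1})$, and the restriction of $-\Delta$ to $\mathbf{1}^\perp$ is positive definite with smallest eigenvalue $\omega_2^2 > 0$. Let $P$ be the orthogonal projection onto $\mathbf{1}^\perp$. Applying $P$ and $I-P$ to $F(x,\epsilon)=0$ and using $P\Delta = \Delta$ splits the equation into
\begin{equation}
\epsilon\Delta y + Pf(c\mathbf{1}+y) = 0, \qquad \sum_{k=1}^N f_k(c\mathbf{1}+y) = 0.
\end{equation}

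Taking the inner product of the first equation with $y$ and using $\langle y,-\Delta y\rangle \geq \omega_2^2\|y\|^2$ gives $\epsilon\omega_2^2\|y\|^2 \leq \|y\|\,\|f(c\mathbf{1}+y)\|$; since any fixed point lies in $[0,1]^N$ by Corollary \ref{physical-solutions}, $\|f(c\mathbf{1}+y)\|$ is bounded by a constant $M$ depending only on $\gamma,a,N$, so $\|y\| \leq M/(\epsilon\omega_2^2)$. Rewriting the first equation as $y = -\epsilon^{-1}(\Delta|_{\mathbf{1}^\perp})^{-1}Pf(c\mathbf{1}+y)$, and using that $f$ is Lipschitz on $[0,1]^N$, this defines $y$ for large $\epsilon$ as a unique smooth function $y(c,\epsilon)$ of $c\in[0,1]$ via a contraction mapping, with $\|y(c,\epsilon)\| = O(1/\epsilon)$. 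In particular $y(c_*,\epsilon)=0$ for $c_*\in\{0,a,1\}$, because the constants $c_*\mathbf{1}$ are exact solutions for every $\epsilon$.

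The remaining scalar equation is $g(c,\epsilon) := \sum_k f_k(c\mathbf{1}+y(c,\epsilon)) = 0$. Because $f$ is a cubic polynomial its Taylor expansion around $c$ is exact, and combined with $\sum_k y_k = 0$ this yields
\begin{equation}
g(c,\epsilon) = Nf(c) + \tfrac12 f''(c)\|y\|^2 + \tfrac16 f'''(c)\sum_k y_k^3 = Nf(c) + O(\epsilon^{-2})
\end{equation}
uniformly in $c\in[0,1]$. Since $f$ has simple roots only at $c_*\in\{0,a,1\}$, for any sufficiently small $\delta>0$ there is $\eta>0$ with $|Nf(c)|\geq\eta$ on $[0,1]\setminus\bigcup_{c_*}(c_*-\delta,c_*+\delta)$. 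Choosing $\epsilon_c$ so that the $O(\epsilon^{-2})$ error term is below $\eta$ forces every root of $g$ to lie in one of the three small intervals, and inside each interval $\partial_c g(c,\epsilon) = Nf'(c_*) + O(\delta) + O(\epsilon^{-1})$ is bounded away from zero. Hence $g(\cdot,\epsilon)$ is strictly monotone there, and since $g(c_*,\epsilon)=0$ monotonicity yields $c=c_*$ and therefore $x = c_*\mathbf{1}$.

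The main technical obstacle is ensuring that the contraction-mapping construction of $y(c,\epsilon)$ is uniform in $c\in[0,1]$, rather than merely local around one of the base points; this is in fact routine because $(\epsilon\Delta|_{\mathbf{1}^\perp})^{-1}$ has operator norm $\leq 1/(\epsilon\omega_2^2)$ independent of $c$, while $f$ has a finite global Lipschitz constant on the compact cube. A side remark: the argument uses connectedness of the graph; on a graph with $K$ components the same reasoning gives $3^K$ fixed points that are constant on each component, and the statement of the proposition should be read componentwise in that case.
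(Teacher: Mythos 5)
Your proof is correct, and it uses the same underlying decomposition as the paper: a Lyapunov--Schmidt splitting of the fixed-point equation along $\ker\Delta=\mathrm{span}(\mathbf 1)$ and its orthogonal complement, followed by a reduction to a scalar bifurcation equation whose only roots for large $\epsilon$ are $c\in\{0,a,1\}$. Where you genuinely diverge is in how the reduced problem is closed. The paper passes to $\mu=1/\epsilon$, takes an arbitrary sequence of solutions with $\epsilon_n\to\infty$, extracts a convergent subsequence by compactness of $I^{N}\times[0,1]$ to show accumulation at the constant states, and then invokes the implicit function theorem twice (once to solve for $w=h(\mu;v)$, once near each $(v_r,0)$ in the bifurcation equation) together with local uniqueness of branches. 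You instead make everything quantitative: the spectral-gap bound $\|y\|\le M/(\epsilon\omega_2^2)$, a contraction mapping uniform in $c\in[0,1]$, the exact cubic Taylor expansion giving $g(c,\epsilon)=Nf(c)+O(\epsilon^{-2})$, and simplicity of the roots of $f$ plus monotonicity of $g$ near each $c_*$ to force $c=c_*$ and hence $y=y(c_*,\epsilon)=0$. The trade-off is that the paper's soft argument is shorter to state but purely qualitative, whereas yours yields in principle an explicit $\epsilon_c$ in terms of $\omega_2$, $\gamma$, $a$, $N$, and avoids the subsequence extraction entirely; your observation that connectedness is what makes the kernel one-dimensional (and the componentwise generalization) is also a useful addition that the paper leaves implicit. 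One small correction: you cite Corollary \ref{physical-solutions} to justify $x\in[0,1]^N$, but that corollary is a small-$\epsilon$ statement about continued branches and does not apply here; the boundedness you need is already part of the hypothesis $x\in I^N$ of the proposition, so nothing is lost. You should also state explicitly that $\partial_c y=O(\epsilon^{-1})$, obtained by differentiating the fixed-point relation for $y$, since your bound on $\partial_c g$ relies on it; this is routine but worth a line.
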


The proof is given at the end of this section. 
The dynamical importance of $\epsilon_c$ will be discussed further
in the next section. The general idea is that for 
$\epsilon > \epsilon_c$
all initial conditions ($\neq a[1,\ldots,1]^T$) should 
go to one of the two fixed points $c[1,\ldots,1]^T$, $c = 1$, $0$, 
as $t \rightarrow \infty$.
\begin{figure} [H]
\centerline{
\epsfig{file=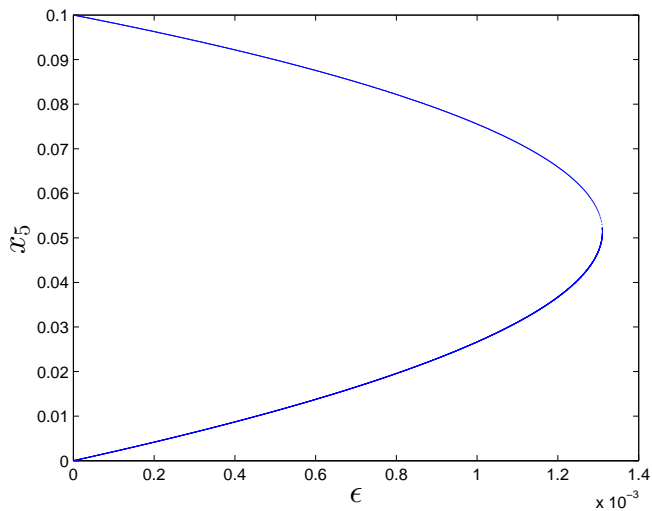,width=0.8\linewidth,angle=0}}
\caption{Component $x_5$ v.s. $\epsilon$ for a branch connecting the 
$\epsilon = 0$ fixed points
$(1, 1, 1, 0, 0.1, 0)$ and $(1, 1, 1, 0, 0, 0)$. }
\label{conti}
\end{figure}

An interesting problem is the computation of $\epsilon_c$. 
One idea is to continue all branches starting at $\epsilon = 0$ 
solutions and find the largest 
value $\epsilon_0$ of a fold. This computation would give 
a lower estimate of $\epsilon_c$, since we can not at present 
rule out the possibility of fixed points not belonging to these branches. 
Also it is of interest to see whether we can 
have a family of fixed points $x(\epsilon)$ that are stable 
for $\epsilon$ arbitrarily close to 
$\epsilon_c$, e.g. a continuous branch having a fold with 
change of stability at $\epsilon_c$.    
To obtain a first estimation of $\epsilon_c$ 
we have examined numerically all branches starting from 
stable $\epsilon = 0$ solutions for a fixed value of $a$.
There are $2^6 - 2$ such branches (we exclude 
$c[1,\ldots,1]^T $, with $c = 1$, $0$).
These are solutions $x(0)$
with $S_a = \emptyset$. In all (non-spatially homogeneous) 
cases these solutions are connected to 
an unstable solution ${\tilde x}(0)$ of the $\epsilon = 0 $ problem, 
with $S_a \neq  \emptyset$. 
For $a = 0.1$, 
the largest value of the 
fold coupling $\epsilon_0$ is 
${\overline \epsilon}_0 =  0.00299835224$,  
and is observed for the branch connecting
the $\epsilon = 0$ solutions $ [0, 0, 0, 0, 0, 1]^T  $ 
and $[0, 0, 0, 0, 0.1,  1]^T $.

Note that the $\epsilon = 0$ solution $ [0, 0, 0, 0, 0, 1]^T $ 
has only one neighbor. This is read from the Laplacian (\ref{lap_6}). 
It is reasonable to expect that the solutions that are 
the last to exist have the least neighbors. 
We see from (\ref{lap_6}) that all other $\epsilon = 0$ 
solutions with $|S_1| = 0$ have more that two neighbors, 
and it is observed that the corresponding branches undergo folds 
at smaller values of $\epsilon$. 
For example the branch starting from 
$ [0, 0, 0, 1, 0, 0]^T$, with two neighbors by (\ref{lap_6}), 
undergoes a fold
at $ \epsilon_0 =  0.00281313677$, 
while the branch starting from 
$[0, 0, 0, 0, 1, 0]^T$, with four neighbors, 
undergoes a fold
at $ \epsilon_0 = 0.00252927787$. 
The notion of neighbors can be extended to ($|S_a| = 0$) 
$\epsilon = 0$ solutions with $|S_1| > 1$. In such cases 
we can look for the number
of external connections to the set $S_1$, i.e. the number of 
points having distance one from $S_1$. 
We see that more sites in $S_1$ generally imply lower $\epsilon_0$
in the corresponding branch. For example 
the branch starting from 
$[1,1,1,1,1, 0]^T$, where $S_1$ has one external connection,  
undergoes a fold     
at  $ \epsilon_0  = 0.00250694432$. This is lower that the
value of the fold value $ \epsilon_0$ of the 
branch starting from 
$ [0, 0, 0, 1, 0, 0]^T$ above, with two neighbors but fewer peaks.
Comparing the values of $\epsilon_0$ for the
branches corresponding to $ [0, 0, 0, 0, 0, 1]^T $ 
and $[1,1,1,1,1, 0]^T$, 
we also see that complementary $\epsilon = 0$ solutions  
$x(0)$, $x'(0)$ (with $|S_a| = 0$), 
i.e. ones with $S_1(x(0)) $, $S_1(x'(0)) $ that are disjoint and 
whose union is the set of all nodes, generally have 
corresponding branches with different fold values. 

The $\epsilon = 0$ solutions not considered in the above enumeration
are expected to correspond to branches of solutions 
that are linearly unstable. Thus, even if we find 
a static solution such that 
$\epsilon_0 > {\overline \epsilon}_0=0.00299835224$,
we expect that for $\epsilon > {\overline \epsilon}_0$,  
almost all initial conditions of the time dependant system
(\ref{zeldovich}) go to either 
$c[1,\ldots,1]^T$, $c = 1$, $0$, 
as $t \rightarrow \infty$.

To better understand how $\epsilon_0$ depends
on the type of front and node connectivity, 
we develop a simple argument that assumes
$\epsilon_0$ is small, and that 
all sites except one that we call $n_c$ 
have values $1 + O(\epsilon)$, or $O(\epsilon)$, see 
subsection 3.3 below. This is consistent with what we see numerically, 
namely that the node that is destabilized first has value
approximatelly $a /2$, see e.g. Fig. \ref{conti}. 
Other sites have values that are much lower. 
The argument is as follows. Call $x$ the value of the node $n_c$ 
that will first destabilize.  Then the equation at $n_c$ for $x$ is
$$\epsilon ( N - K x + O(\epsilon) ) + x(1-x)(x-a) = 0~,$$
where $N$ is the number of neighbors of 
$n_c$ that are at 1 and $K$ is the connectivity
of $n_c$. This yields
\be\label{epsofx}
\epsilon = { x(1-x)(x-a) \over Kx -N} .\ee
From the continuation study of the static solutions we have seen that for
$$\epsilon=\epsilon_0,~~ x\approx a/2  ~.$$
Combining this observation with (\ref{epsofx}) yields the estimate for 
\be\label{epsofa}
\epsilon_0 = {a^2 \over 4} {2-a \over 2N-Ka}. \ee
This estimate is reported in Table \ref{tab1} 
below, together with the $\epsilon_o$
found numerically.
For $a$ relatively small, e.g. for $a= 0.1$ used here, 
we see excellent agreement.

\subsection{Asymptotics of the fixed points}

In what follows we show some general results on the profile 
of the fixed points of  
(\ref{zeldovich}) for $\epsilon > 0$, and small. 
We estimate the decay of the fixed point profiles 
away from the sites where the solution is near unity;
we also see that we can obtain small $ \epsilon$ 
asymptotics for $x(\epsilon)$ at all sites.  
For instance, we show that 
the amplitude $x_n(\epsilon)$ of the equilibrium 
at the site $n$ is 
$$ x_n(0) + O(\epsilon^{d_n}), $$
where $d_n$ is the distance of site $n$ from the 
analogue of the ``interface'' of the $\epsilon = 0$
configuration, see Lemma \ref{asympt-1}.  
Roughly speaking, the interface or ``front'' of an 
$\epsilon = 0$
configuration, defined more precisely below, 
consists of the sites where the solution jumps 
from zero to unity.   
The small $\epsilon $ asymptotic gives us
information on the decay of the $x_n(\epsilon)$ as we move
away form the sites that are near unity.
For sites with value near unity    
it also tells us that are 
further away from the interface have values 
that are much closer to unity. 
%
%
 
Proposition \ref{asympt-2}
can be also used to compare 
small $\epsilon$ solutions continued from different $x(0)$, 
see Corollary \ref{solution-pairs} below.

The proof of Proposition \ref{asympt-2} is based on 
small$-\epsilon$ expansions
$$ x_n(\epsilon) = \sum^{\infty}_{m = 0} a_{n,m} \epsilon^m, $$ 
valid for all sites $n$. The idea is to insert these expression 
into (\ref{fp-zel}) and examine the coefficients of the series. 
We first obtain a less precise, intermediate statement,   
Lemma \ref{asympt-1}, using induction on the distance from the
``interface''  
between ones and zeros of the $\epsilon = 0$ 
solutions.   
Proposition \ref{asympt-2} uses the same strategy, 
and Lemma \ref{asympt-1}.

The precise statements use the following definitions and notation.

Let $\hbox{nbd}(n)$ denote the sites adjacent to the site $n$.
Let $c_n = | \hbox{nbd}(n)|$. 
Let $\hbox{dist}(R,n)$ denote the distance between 
the set of sites $R$, and a node $n$.  

Given a nontrivial solution $x(0)$ of the $\epsilon = 0$ 
equation $F = 0$,  
denote by $S_1$, $S_a$, $S_0$ the sets of indices $n$ where 
$x_n = 1$, $a$, $0$ respectively. Also let $S_A = S_1 \cup S_a$. 
Let $I$ be the set of nodes $n \in S_1$ having at least one neighbor
$j \in S_a \cup S_0$. 
The set $I$ plays the role of the ``interface'' of 
the configuration.\\
Then we have:

\begin{proposition}
\label{asympt-2}
Let $x(0)$ be a nontrivial solution of
equation (\ref{stat-zel}), (\ref{fp-zel}) 
with $\epsilon = 0 $, and 
let $x(\epsilon)$, $\epsilon \in [0,\epsilon_0]$ 
denote the unique branch of solutions of $F = 0$, $\epsilon > 0$, 
that continue $x(0)$ for $\epsilon \geq 0$.  
Consider the sets $S_1$, $S_a$, $S_0$, and $I$ corresponding to 
$x(0)$ as defined above, with $S_i$, $I$ nonempty.
Then for $\epsilon > 0$ sufficiently small we have that 
(i) $n \in S_0$, $\hbox{dist}(S_A,n) = m \geq 1 $ imply    
\begin{equation}
\label{0-site-asympt-2}
x_n(\epsilon) = a_{n,m} \epsilon^{m} + O(\epsilon^{m+1}), \quad\hbox{with}\quad 
a_{n,m}  > 0, 
\end{equation}
and (ii) $n \in S_1$, $ \hbox{dist}(I,n) = m \geq 0$ imply
\begin{equation}
\label{1-site-asympt-2}
x_n(\epsilon) = 1+ a_{n,m+1}\epsilon^{m + 1} + O(\epsilon^{m+2}),  
\quad\hbox{with}\quad 
a_{n,m+1}  <  0,
\end{equation}
\end{proposition}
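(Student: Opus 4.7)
The plan is to exploit the analyticity of the continued branch. As noted in subsection 3.2, the Jacobian $D_1 F(x(0), 0)$ is diagonal with entries $f'(x_n(0)) \in \{-\gamma a,\ \gamma a(1-a),\ -\gamma(1-a)\}$, all nonzero for $0 < a < 1$, so the implicit function theorem yields an analytic branch $x(\epsilon)$ near $\epsilon = 0$ and each component admits a convergent Taylor series $x_n(\epsilon) = \sum_{k \geq 0} a_{n,k}\epsilon^k$ with $a_{n,0} = x_n(0)$. Substituting into $F_n = \epsilon(\Delta x)_n + f(x_n) = 0$ and matching powers of $\epsilon$ at order $k \geq 1$ gives the recursion
\[
f'(x_n(0))\, a_{n,k} \;=\; -(\Delta a_{\cdot, k-1})_n \;-\; \mathcal{N}_{n,k},
\]
where $\mathcal{N}_{n,k}$ is a polynomial in $a_{n,1}, \ldots, a_{n, k-1}$ coming from the quadratic and cubic parts of $f$ Taylor-expanded around $x_n(0)$. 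The essential structural feature is that $a_{n,k}$ is determined by coefficients of strictly lower order, and only through $a_{j, k-1}$ at sites $j$ in the closed neighborhood of $n$.

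The proof then runs by induction on the distance parameter $m$, carrying a stronger vanishing claim than the leading-order asymptotics. For part (i), I would show simultaneously that for every $n \in S_0$ with $\mathrm{dist}(S_A, n) = m$ one has $a_{n, \ell} = 0$ for all $\ell < m$ and $a_{n, m} > 0$. The base case $m = 1$ is immediate: $a_{n,0} = 0$ kills $\mathcal{N}_{n,1}$, and $(\Delta a_{\cdot, 0})_n = |\mathrm{nbd}(n) \cap S_1| + a\,|\mathrm{nbd}(n) \cap S_a| > 0$ since $n$ has at least one neighbor in $S_A$; dividing by $f'(0) = -\gamma a$ gives $a_{n, 1} > 0$. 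In the inductive step, neighbors $j$ of $n$ must themselves lie in $S_0$ (any $S_a$ or $S_1$ neighbor would put $n$ at distance $1$), at distance $m - 1$, $m$, or $m + 1$ from $S_A$; by the induction hypothesis, only those at distance $m - 1$ contribute at order $m-1$, each contributing $a_{j, m-1} > 0$, and at least one such neighbor exists (the predecessor on a shortest path to $S_A$). Together with the vanishing of $\mathcal{N}_{n, m}$ (every monomial contains an $a_{n, \ell}$ with $\ell < m$, which is zero by induction), the recursion yields $a_{n, m} > 0$. Part (ii) is proved by the symmetric argument centered at $x = 1$ with $f'(1) = -\gamma(1-a)$; the shift from $\epsilon^m$ to $\epsilon^{m+1}$ is forced because a node $n \in S_1 \setminus I$ has only $S_1$-neighbors whose common zeroth-order value $a_{j, 0} = 1$ makes the Laplacian sum collapse, so the first surviving order is delayed by one.

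The main obstacle is the combined bookkeeping of graph distance and power-series order: to propagate the sign statement one must carry the stronger vanishing claim ``$a_{n, \ell} = 0$ for $\ell < \mathrm{dist}(S_A, n)$'' (and analogously for $S_1$ sites) rather than merely the order of the first nonzero coefficient, since otherwise the terms $\mathcal{N}_{n, k}$ and the cross-contributions from neighbors at larger distance cannot be shown to be of higher order at the step where the sign is read off. The preceding Lemma~\ref{asympt-1} presumably handles exactly this order-of-magnitude part of the argument, so that Proposition~\ref{asympt-2} only has to track signs; these come directly from the fixed negative signs of $f'(0)$ and $f'(1)$, together with the strict positivity (respectively negativity) of the leading Laplacian contribution, which is inherited from neighbors at the preceding distance level by the induction hypothesis.
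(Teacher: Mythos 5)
Your proposal is correct and follows essentially the same route as the paper: an analytic implicit-function-theorem expansion $x_n(\epsilon)=\sum_k a_{n,k}\epsilon^k$, induction on the graph distance to $S_A$ (resp.\ to $I$), with the order-of-magnitude vanishing $a_{n,\ell}=0$ for $\ell<m$ supplied by Lemma~\ref{asympt-1} and the signs read off from $f'(0)=-\gamma a<0$, $f'(1)=-\gamma(1-a)<0$ together with the sign of the leading Laplacian contribution inherited from neighbors one distance level closer. Your packaging of the order-$k$ balance as the single recursion $f'(x_n(0))\,a_{n,k}=-(\Delta a_{\cdot,k-1})_n-\mathcal{N}_{n,k}$ is a slightly cleaner bookkeeping device than the paper's term-by-term estimates, but the substance is identical.
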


An immediate consequence is: 

\begin{corollary}
\label{physical-solutions}
Let $x(0)$ be a nontrivial solution
equation (\ref{fp-zel}) with $\epsilon = 0 $, and 
let $x(\epsilon)$, $\epsilon \in [0,\epsilon_0]$ 
denote the unique branch of solutions of (\ref{fp-zel}), 
$\epsilon > 0$, 
that continue $x(0)$ for $\epsilon \geq 0$.  
Then for $\epsilon > 0$ and sufficiently small
we have $x_n(\epsilon) \in (0,1)$, for all sites $n$.  
\end{corollary}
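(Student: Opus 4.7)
The plan is to read the conclusion $x_n(\epsilon)\in(0,1)$ essentially directly off the asymptotic expansions supplied by Proposition \ref{asympt-2}, treating the three possible site types separately and then choosing $\epsilon$ small enough uniformly over the finitely many nodes. The corollary is really a packaging statement: once the leading-order behavior at each site is known, the sign information in the expansions forces the solution into the open unit interval.

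First I would handle $n\in S_0$ at distance $m\geq 1$ from $S_A$: part (i) of Proposition \ref{asympt-2} gives $x_n(\epsilon)=a_{n,m}\epsilon^m+O(\epsilon^{m+1})$ with $a_{n,m}>0$, so the positive leading coefficient forces $x_n(\epsilon)>0$ for $\epsilon>0$ small, while the same estimate being $O(\epsilon)$ forces $x_n(\epsilon)<1$. Symmetrically, for $n\in S_1$ at distance $m\geq 0$ from the interface $I$, part (ii) gives $x_n(\epsilon)=1+a_{n,m+1}\epsilon^{m+1}+O(\epsilon^{m+2})$ with $a_{n,m+1}<0$; the negative leading correction places $x_n(\epsilon)<1$, and since $x_n(\epsilon)\to 1$ as $\epsilon\to 0^+$ continuity rules out crossing zero for small $\epsilon$.

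For sites $n\in S_a$ (when $S_a$ is nonempty), Proposition \ref{asympt-2} does not directly apply, but the implicit function theorem, which was already invoked to produce the branch $x(\epsilon)$ from the hyperbolic solution $x(0)$, guarantees that $x_n(\epsilon)\to x_n(0)=a\in(0,1)$ continuously as $\epsilon\to 0^+$; hence $x_n(\epsilon)\in(0,1)$ for $\epsilon$ sufficiently small. Since the graph has finitely many nodes, taking the minimum of the site-wise thresholds yields a single $\epsilon_0>0$ for which all the above conclusions hold simultaneously.

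I do not anticipate a real obstacle here, since the work is entirely absorbed by Proposition \ref{asympt-2} and the implicit function theorem; the only subtle point worth flagging is the tacit assumption that the network is connected. On a disconnected graph, a component lying entirely in $S_0$ would carry the trivial branch $x_n\equiv 0$, and the conclusion $x_n\in(0,1)$ would fail there; for the connected graphs used throughout the paper this is not an issue, and the statement is otherwise an immediate corollary.
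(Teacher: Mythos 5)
Your proof is correct and follows essentially the same route as the paper's: sites in $S_a$ are handled by the continuity $x_n(\epsilon)=a+O(\epsilon)$ coming from the implicit function theorem, and all other sites are read off from the signed leading-order expansions of Proposition \ref{asympt-2}, with a final minimum over the finitely many nodes. Your remark about connectedness (a component contained entirely in $S_0$ would sit at infinite distance from $S_A$ and remain identically zero) is a valid caveat that the paper leaves implicit, but it does not affect the argument for the connected graphs considered there.
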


\begin{proof}
For sites $n \in S_a$ we have 
$x_n(\epsilon) = a + O(\epsilon) \in (0,1)$ for $\epsilon $
sufficiently small. 
For other sites the statement follows 
form Proposition \ref{asympt-2}.
\end{proof}

\begin{remark}
The above asymptotic is appears to be related to the 
estimate of $\epsilon_0$ in \ref{epsofa}, 
and the
assumption that the all sites $n \neq n_c$ have 
values $1 + O(\epsilon)$, and $O(\epsilon)$.
Indeed most sites $n \neq n_c$ are seen to be $O(\epsilon)$ from their 
$\epsilon = 0$ values at $\epsilon_0$.  
Note however that the site $n_c$ also has the value $0$ (or $\alpha$)
at $\epsilon = 0$, and comes near $ a / 2 $ as $\epsilon$ approaches 
$\epsilon_0$. The use of the small$-\epsilon$ 
asymptotic in justifying  \ref{epsofa} is not clear. 
\end{remark}

Another consequence of 
Proposition \ref{asympt-2}
is that 
for $\epsilon > 0$ sufficiently small 
there exist pairs of static solutions 
$x$, $y$ of the Zeldovich 
equation satisfying $x_n < y_n $, 
$\forall n \in \{1, \ldots, N\}$.
The construction is as follows:

\begin{corollary}
\label{solution-pairs}
Let $x(\epsilon)$, $y(\epsilon)$, $\epsilon$ 
sufficiently small, be continuations
of the $\epsilon = 0$ fixed points $x = x(0)$, 
$y = y(0)$ of the Zeldovich equation satisfying
\begin{equation}
(i)     \quad S_\alpha(x) = S_\alpha(y) = \emptyset, \quad
(ii)    \quad S_1(x) \subset S_1(y), 
\end{equation}
\begin{equation}
(iii) \quad \hbox{dist}(I(x),n) < \hbox{dist}(I(y),n), \quad 
\forall n \in S_1(x) \cup S_1(y),
\end{equation}
\begin{equation}
(iv) \quad \hbox{dist}(S_1(x),n) > \hbox{dist}(S_1(y),n), \quad \forall
n \in S_0(x) \cup S_0(y).
\end{equation}
Then 
for all $\epsilon > 0$ sufficiently small we have 
$x_n(\epsilon) < y_n(\epsilon)$, 
$\forall n \in \{1, \ldots, N\}$. 
\end{corollary}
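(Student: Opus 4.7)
The plan is to compare the asymptotic expansions of $x(\epsilon)$ and $y(\epsilon)$ node by node, reducing the inequality $x_n(\epsilon)<y_n(\epsilon)$ to a comparison of leading-order terms supplied by Proposition \ref{asympt-2}. Condition (i) ensures every node lies in $S_1$ or $S_0$ for each of the two $\epsilon=0$ solutions, and condition (ii) then gives $S_0(y)\subset S_0(x)$, so the node set partitions into three disjoint classes: $n\in S_0(y)$ (hence also $n\in S_0(x)$), $n\in S_1(x)$ (hence also $n\in S_1(y)$), and $n\in S_1(y)\setminus S_1(x)$ (hence $n\in S_0(x)$).

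For $n\in S_0(y)$, set $m_x=\hbox{dist}(S_1(x),n)$ and $m_y=\hbox{dist}(S_1(y),n)$. Since $S_a=\emptyset$ for both solutions, $S_A=S_1$, and Proposition \ref{asympt-2}(i) gives
\[
x_n(\epsilon)=a^{(x)}_{n,m_x}\epsilon^{m_x}+O(\epsilon^{m_x+1}),\qquad y_n(\epsilon)=a^{(y)}_{n,m_y}\epsilon^{m_y}+O(\epsilon^{m_y+1}),
\]
with both leading coefficients strictly positive. Hypothesis (iv) forces $m_x>m_y\geq 1$, so
\[
y_n(\epsilon)-x_n(\epsilon)=a^{(y)}_{n,m_y}\epsilon^{m_y}\bigl(1+O(\epsilon)\bigr)>0
\]
for $\epsilon$ sufficiently small. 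For $n\in S_1(x)$, set $m_x=\hbox{dist}(I(x),n)$ and $m_y=\hbox{dist}(I(y),n)$; Proposition \ref{asympt-2}(ii) and hypothesis (iii) give $m_x<m_y$ with leading coefficients strictly negative, and
\[
y_n(\epsilon)-x_n(\epsilon)=-a^{(x)}_{n,m_x+1}\epsilon^{m_x+1}\bigl(1+O(\epsilon)\bigr)>0.
\]
Finally, for $n\in S_1(y)\setminus S_1(x)\subset S_0(x)$ the comparison is automatic: $x_n(\epsilon)\to 0$ and $y_n(\epsilon)\to 1$ as $\epsilon\to 0^+$, so $x_n(\epsilon)<y_n(\epsilon)$ for all sufficiently small $\epsilon$, independently of any distance condition.

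The main obstacle, to the extent there is one, is recognizing that hypotheses (iii) and (iv) are precisely the strict interface-distance inequalities needed to orient the leading-order comparison correctly in the two nontrivial cases, and that no condition is required in the mixed case because there the solutions already disagree at $\epsilon=0$. Once the partition is set up, one simply chooses a single $\epsilon_0>0$ small enough that the leading-order term dominates the remainder in all finitely many expansions simultaneously, yielding the strict inequality at every site for $0<\epsilon<\epsilon_0$.
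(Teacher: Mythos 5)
Your proof is correct and follows essentially the same route as the paper's: the same three-way partition of the nodes (interior-of-$S_1$ sites, common $S_0$ sites, and the mixed sites where the comparison is automatic at $\epsilon=0$), with the leading-order coefficients and exponents from Proposition \ref{asympt-2} oriented by hypotheses (iii) and (iv). No gaps.
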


\begin{proof}
We consider the three cases $n \in S_0(x) \cup S_1(y)$, 
$S_1(x) \cup S_1(y) $, and $S_0(x) \cup S_0(y) $. 
By (ii) $S_1(x) \cup S_0(y) = \emptyset $.
For  $n \in S_0(x) \cup S_1(y)$ we have 
\begin{equation}
y_n(\epsilon) = 1 - O(\epsilon) > x_n(\epsilon) = O(\epsilon),
\notag 
\end{equation}
for $\epsilon > 0$ small.

For $n \in S_1(x) \cup S_1(y)$,  
Proposition \ref{asympt-2} yields
\begin{eqnarray}
x_n(\epsilon) = 1 - |a_{n,m+1}| \epsilon^{m+1} + O(\epsilon^{m+2}), 
\notag \\
y_n(\epsilon) = 1 - |a_{n,{\tilde m}+1}| \epsilon^{{\tilde m}+2} 
+ O(\epsilon^{{\tilde m}+2}), \notag
\end{eqnarray}
with $a_{n,m}$, $ a_{n,{\tilde m}+1} \neq 0$, and 
\begin{equation}
m = \hbox{dist}(I(x),n), \quad 
{\tilde m} =  \hbox{dist}(I(y),n), \quad {\tilde m} > m. 
\notag
\end{equation} 
Therefore $ y_n(\epsilon) > x_n(\epsilon) $ for $\epsilon > 0 $
small enough.

For $n \in S_0(x) \cup S_0(y)$, 
Proposition \ref{asympt-2} yields
\begin{eqnarray}
x_n(\epsilon) = |a_{n,\mu}| \epsilon^{\mu} + O(\epsilon^{\mu + 1}), 
\notag \\
y_n(\epsilon) = |a_{n,{\tilde \mu}}| \epsilon^{{\tilde \mu}} 
+ O(\epsilon^{{\tilde \mu}+1}), \notag
\end{eqnarray}
with $a_{n,\mu}$, $ a_{n,{\tilde \mu}+1} \neq 0$, and
\begin{equation}
\mu = \hbox{dist}(I(x),n), \quad 
{\tilde \mu} =  \hbox{dist}(I(y),n), \quad \mu > {\tilde \mu}. 
\notag
\end{equation}
Again $ y_n(\epsilon) > x_n(\epsilon) $ for $\epsilon > 0 $
small enough.
\end{proof}

The proof of Proposition \ref{asympt-2}
uses the following intermediate result. 

\begin{lemma}
\label{asympt-1}
Let $x(0)$ be a nontrivial solution
equation $F = 0$ with $\epsilon = 0 $, and 
let $x(\epsilon)$, $\epsilon \in [0,\epsilon_0]$ 
denote the unique branch of solutions of $F = 0$, $\epsilon > 0$, 
that continue $x(0)$ for $\epsilon \geq 0$.  
Consider the sets $S_1$, $S_a$, $S_0$, and $I$ corresponding to 
$x(0)$ as defined above, with $S_1$, $I$ nonempty.
Then for $\epsilon > 0$ sufficiently small we have that 
(i) $n \in S_0$, $\hbox{dist}(S_A,n) \geq  m \geq 1 $ imply    
\begin{equation}
\label{0-site-asympt-1}
x_n(\epsilon) = O(\epsilon^{m}), 
\end{equation}
and (ii) $n \in S_1$, $ \hbox{dist}(I,n) \geq m \geq 0$ imply
\begin{equation}
\label{1-site-asympt-1}
x_n(\epsilon) = 1+ O(\epsilon^{m + 1}).
\end{equation}
\end{lemma}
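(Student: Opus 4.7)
The plan is to prove (i) and (ii) by strong induction on $m$, using the implicit function theorem for the base cases and the fixed-point equation at a single node $n$ to drive the inductive step. Since $F$ is polynomial and all $\epsilon = 0 $ solutions are hyperbolic, the branch $\epsilon \mapsto x(\epsilon)$ is real-analytic in a neighborhood of $\epsilon = 0$, so $x_n(\epsilon) = x_n(0) + O(\epsilon)$ at every node $n$. This immediately handles the base cases: (i) with $m = 1$ reads $x_n(\epsilon) = O(\epsilon)$ for $n \in S_0$, and (ii) with $m = 0$ reads $x_n(\epsilon) = 1 + O(\epsilon)$ for $n \in S_1$.

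The inductions for (i) and (ii) can be carried out independently, thanks to a simple combinatorial observation about how distance propagates to neighbors. If $n \in S_0$ with $\hbox{dist}(S_A,n) \geq m \geq 2$, then $n$ has no neighbor in $S_A$, so every neighbor $j$ of $n$ lies in $S_0$ and the triangle inequality gives $\hbox{dist}(S_A,j) \geq m - 1$. Likewise, if $n \in S_1$ with $\hbox{dist}(I,n) \geq m \geq 1$, then $n \notin I$, which by the very definition of $I$ forces every neighbor of $n$ to lie in $S_1$; the triangle inequality then yields $\hbox{dist}(I,j) \geq m - 1$. These observations are what allows the exponent in $\epsilon$ to be incremented by one as we pass from a node to its neighbors.

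For the inductive step of (i), I would rewrite the fixed-point equation at $n$ as
\[
\gamma\,(1 - x_n(\epsilon))(x_n(\epsilon) - a)\,x_n(\epsilon) \;=\; -\epsilon\,(\Delta x(\epsilon))_n \;=\; -\epsilon\Bigl[\sum_{j \in \hbox{nbd}(n)} x_j(\epsilon) \;-\; c_n\,x_n(\epsilon)\Bigr].
\]
By the inductive hypothesis each neighbor value is $O(\epsilon^{m-1})$, so the right-hand side is $O(\epsilon^m)$. Since $x_n(\epsilon) \to 0$ along the branch, the prefactor $\gamma(1 - x_n)(x_n - a)$ tends to $-\gamma a \neq 0$, so it is bounded away from zero for $\epsilon$ small; dividing gives $x_n(\epsilon) = O(\epsilon^m)$. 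For the inductive step of (ii) I would set $y_n = x_n - 1$, so that the nonlinearity becomes $-\gamma\, y_n(1 + y_n)(1 - a + y_n)$; using that every neighbor of $n$ lies in $S_1$, the equation can be rearranged as
\[
y_n\bigl[\gamma(1 + y_n)(1 - a + y_n) + \epsilon\, c_n\bigr] \;=\; \epsilon \sum_{j \in \hbox{nbd}(n)} y_j(\epsilon).
\]
The inductive hypothesis forces the right-hand side to be $O(\epsilon^{m+1})$, while the bracket converges to $\gamma(1-a) \neq 0$, so $y_n(\epsilon) = O(\epsilon^{m+1})$.

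The main obstacle is essentially bookkeeping rather than anything conceptual: one must verify that the prefactors $(1-x_n)(x_n - a)$ and $\gamma(1 + y_n)(1 - a + y_n) + \epsilon\, c_n$ remain uniformly bounded away from zero along the branch for small $\epsilon$, which is exactly guaranteed by the continuity of $x(\epsilon)$ provided by the implicit function theorem together with the hyperbolicity of $x(0)$. The genuine geometric content of the lemma is the neighbor/distance observation of the second paragraph, which is what lets the order in $\epsilon$ grow by one unit as one moves further from the interface.
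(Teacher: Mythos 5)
Your proof is correct and follows essentially the same route as the paper's: the implicit function theorem gives the base cases, and the induction on $\hbox{dist}(S_A,\cdot)$ and $\hbox{dist}(I,\cdot)$ exploits exactly the same observation that all neighbors of an interior node lie one step closer to $S_A$ (resp.\ $I$) and in the same set. The only difference is cosmetic: where the paper expands $x_n(\epsilon)$ as a convergent power series and shows the offending coefficient $a_{n,m}$ vanishes, you divide by the cubic prefactor (which tends to $-\gamma a$, resp.\ $\gamma(1-a)$), which gives the same order estimate while needing only continuity of the branch rather than analyticity.
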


\begin{proof}
We use the analytic version of the implicit value theorem, 
which allows us to write $x_n(\epsilon)$ as a convergent power series in
$\epsilon$,
for $\epsilon$ sufficiently near the origin. 
Thus we 
write $x_n(\epsilon) = \sum^{\infty}_{m = 0} a_{n,m} \epsilon^m$, 
for all sites $n$, see e.g. \cite{Z86}.
(Since the network is finite it is sufficient to use the $C^r$
version for $r$ sufficiently large.) 

We then already have $x_n(\epsilon) = O(\epsilon)$, $ \forall n  \in S_0$, 
and $x_n(\epsilon) = 1 + O(\epsilon)$, $ \forall n \in S_1$.

To show (i) let $n$ satisfy
$\hbox{dist}(S_A,n) \geq 2$. We have 
\begin{eqnarray}
\label{a1-0-1-eq1}
\epsilon (\Delta x)_n  & = & \epsilon [ -c_n(a_{n,1} \epsilon + O(\epsilon^2)) + 
\sum_{j \in \hbox{nbd}(n)} x_j ]   \notag \\
& = & O(\epsilon^2), 
\end{eqnarray}
since $j \in \hbox{nbd}(n)$ implies $j \in S_0$, hence $x_j = O(\epsilon)$. 

Also
\begin{equation}
\label{a1-0-1-eq2}
x_n(1-x_n)(a - x_n) =  - a a_{n,1} \epsilon + O(\epsilon^2).   
\end{equation}
By (\ref{a1-0-1-eq1}), (\ref{a1-0-1-eq2}), and $F = 0$ 
we must then have $a_{n,1} = 0$. 

We use induction:  
suppose that 
if $\hbox{dist}(S_A,n) \geq m \geq 2$, then $x_n = O(\epsilon^m)$.

Then for $n$ satisfying $\hbox{dist}(S_A,n) \geq m + 1$ we have 
\begin{eqnarray}
\label{a1-0-m-eq1}
\epsilon (\Delta x)_n  & = & \epsilon [-c_n(a_{n,m} \epsilon^m +
O(\epsilon^{m+1})) + 
\sum_{j \in \hbox{nbd}(n)} x_j ] \notag \\
& = & O(\epsilon^{m+1}), 
\end{eqnarray}
since $j \in \hbox{nbd(n)}$ implies $\hbox{dist}(S_A,n) \geq m$, 
hence $x_j = O(\epsilon^m)$ by the inductive hypothesis. 
On the other hand
\begin{equation}
\label{a1-0-m-eq2}
x_n(1-x_n)(a - x_n) =  - a a_{n,m} \epsilon^m + O(\epsilon^{m+1}).   
\end{equation}
By (\ref{a1-0-m-eq1}), (\ref{a1-0-m-eq2}), and $F = 0$  
we must then have $a_{n,m} = 0$, 
and therefore $x_n = O(\epsilon^{m+1})$, as required. 

To see (ii) let $n \in S_1$ satisfy
$\hbox{dist}(I,n) = 1$, 
so that all $j \in \hbox{nbd}(n)$ satisfy $x_j(0) = 1 $.
Also $x_n = 1 + O(\epsilon)$.
Then 
\begin{eqnarray}
\label{a1-1-1-eq1}
\epsilon (\Delta x)_n  & = & \epsilon [ -c_n (1 + a_{n,1} \epsilon +
O(\epsilon^2)) +
\sum_{j \in \hbox{nbd}(n)} x_j ] \notag \\
&  = &  \epsilon [ -c_n  - c_n a_{n,1} \epsilon  +
c_n + \sum_{j \in \hbox{nbd}(n)} a_{j,1} \epsilon  + O(\epsilon^2)] \notag \\
& = & O(\epsilon^2).
\end{eqnarray} 
On the other hand 
\begin{equation}
\label{a1-1-1-eq2}
x_n(1-x_n)(a - x_n) =  - (a - 1) a_{n,1} \epsilon + O(\epsilon^2).   
\end{equation}
By (\ref{a1-1-1-eq1}), (\ref{a1-1-1-eq2}), and $F = 0$ we must 
have $a_{n,1} = 0$, 
and therefore $x_n = O(\epsilon^2)$.

For the inductive step, assume that if $n \in S_1$ 
satisfies $ \hbox{dist}(I,n) \geq m$, then $x_n = 1 + O(\epsilon^{m+1})$. 
Consider then a site $n$ satisfying $ \hbox{dist}(I,n)  \geq   m + 1$,
then 
\begin{eqnarray}
\label{a1-1-m-eq1}
\epsilon (\Delta x)_n  & = & \epsilon [ -c_n (1 + a_{n,m+1} \epsilon^{m+1} +
O(\epsilon^{m 
+2})) +
\sum_{j \in \hbox{nbd}(n)} x_j ] \notag \\
&  = &  \epsilon [ -c_n  - c_n a_{n,m+1} \epsilon^{m+1}  +
c_n + \sum_{j \in \hbox{nbd}(n)} a_{j,m} \epsilon^{m+1}  + O(\epsilon^{m+2})] 
\notag \\
& = & O(\epsilon^{m+2}), 
\end{eqnarray} 
using the fact that $j \in \hbox{nbd}(n)$ implies 
$ \hbox{dist}(I,j) \geq m$,
hence $  x_j = 1 + O(\epsilon^{m+1})$ by the inductive hypothesis.
On the other hand 
\begin{equation}
\label{a1-1-m-eq2}
x_n(1-x_n)(a - x_n) = -(a - 1) a_{n,m+1} \epsilon^{m+1} + O(\epsilon^{m + 2}).   
\end{equation}
By (\ref{a1-1-m-eq1}), (\ref{a1-1-m-eq2}), $F = 0$ implies $a_{n,m+1} = 0$, 
and therefore $x_n = 1 + O(\epsilon^{m+2})$, as required.  
\end{proof}


We now prove Proposition \ref{asympt-2}.

\begin{proof}
The starting point is again 
the expression $x_n(\epsilon) = \sum^{\infty}_{m = 0} a_{n,m} \epsilon^m$.
To see (i) first consider sites $n$ satisfying 
$\hbox{dist}(S_A,n) = 1$. 
Letting $J_1$ be the set of sites 
$j \in \hbox{nbd}(n) \cap S_A$, and  
$J_2 = \hbox{nbd}(n) \setminus J_1$, 
we have 
$$ |J_1| = | \hbox{nbd}(n) \cap S_1 | + 
 |  \hbox{nbd}(n) \cap S_a |  >  0. $$
Then 
\begin{eqnarray}
\label{0-1-eq1}
\epsilon (\Delta x(\epsilon))_n  & =  & 
\epsilon [ -c_n O(\epsilon) + \sum_{j \in J_1} x_j + \sum_{j \in J_2} x_j ] 
\notag   \\
&  = &  (  | \hbox{nbd}(n) \cap S_1 |  
+ |  \hbox{nbd}(n) \cap S_a |  a) \epsilon
+ O(\epsilon^2) \notag  \\
&  > &  0
\end{eqnarray}
for $\epsilon > 0$ sufficiently small.
On the other hand 
\begin{equation}
\label{0-1-eq2}
x_n (1 - x_n) (x_n - a)  =  - a \epsilon a_{n,1}  + O(\epsilon^2).  
\end{equation}
By (\ref{0-1-eq1}), (\ref{0-1-eq2}), $F = 0$, 
we need $a_{n,1} > 0$. 

We proceed inductively, assuming that 
if $n \in S_0$ satisfies $\hbox{dist}(S_A,n) = m \geq 1$, then  
$x_n(\epsilon) = a_{n,m} \epsilon^m + O(\epsilon^{m+1})$, with $a_{n,m} > 0$.
Consider then a site $n$ satisfying  $\hbox{dist}(S_A,n) = m +  1$. 
Let $J_m $ be the set of sites $j \in \hbox{nbd}(n)$ satisfying
$\hbox{dist}(S_A,n) = m$. 
Clearly $|J_m| > 0$. 
Also let $J_{m+1} = \hbox{nbd}(n) \setminus J_m$.
By Lemma \ref{asympt-1}, $x_n(\epsilon) = O(\epsilon^{m+1})$.  
Then
\begin{eqnarray}
\label{0-m-eq1}
\epsilon (\Delta x(\epsilon))_n  & =  & 
\epsilon [ -c_n (a_{n,m+1} \epsilon^{m+1} + O(\epsilon^{m + 2}))  + \sum_{j \in
J_m} x_j + 
\sum_{j \in J_{m+1}} x_j ]   \notag \\
&  = &     \sum_{j \in J_m} a_{j,m} \epsilon^{m+1} +  O(\epsilon^{m+2}) 
\notag \\
&  > &  0
\end{eqnarray}
for $\epsilon > 0$ small, since $a_{j,m} > 0$, $\forall j \in J_m$,
by the inductive hypothesis.
On the other hand 
\begin{equation}
\label{0-m-eq2}
x_n (1 - x_n) (x_n - a)  =  - a a_{n,m+1}  \epsilon^{m+1}+ O(\epsilon^{m+ 2}).  
\end{equation}
By (\ref{0-m-eq1}), (\ref{0-m-eq2}), and $F = 0$,  
we therefore need $a_{n,m+1} > 0$. 
 
To see (ii) consider a site 
$n \in S_1 \cap I$, so that $\hbox{dist}(I,n) = 0$. 
Then $x_n(\epsilon) = 1 + a_{n,1} \epsilon + O(\epsilon^2)$, 
and 
\begin{eqnarray}
\label{1-1-eq1-1}
\epsilon (\Delta x(\epsilon))_n  & =  &
\epsilon [   -c_n(1 + O(\epsilon)) + \sum_{j \in \hbox{nbd}(n) \cap S_1} x_j + 
\sum_{j \in \hbox{nbd}(n) \cap S_a} x_j ]  \notag \\
& = & \epsilon (-c_n + |\hbox{nbd}(n) \cap S_1| + 
|\hbox{nbd}(n) \cap S_\alpha| a) 
+ O(\epsilon^2) ].  
\end{eqnarray}

Suppose 
$\mu = |\hbox{nbd}(n) \cap S_\alpha|  \geq 1$, 
then $|\hbox{nbd}(n) \cap S_1| \leq c_n - \mu $, 
and (\ref{1-1-eq1-1}) yield 
\begin{eqnarray}
\label{1-1-eq1-2}
\epsilon (\Delta x(\epsilon))_n  & \leq & 
\epsilon (-c_n + (c_n - \mu) + a \mu) +
O(\epsilon^2) \notag \\
& = &  (-1 + a) \mu \epsilon + O(\epsilon^2) \notag  \\
& < & 0, 
\end{eqnarray}
for $\epsilon > 0$ sufficiently small. 
If $\mu = 0$, $n \in I$ implies 
$ |\hbox{nbd}(n) \cap S_1| <  c_n $, 
so that  (\ref{1-1-eq1-1})
implies 
\begin{eqnarray}
\label{1-1-eq1-3}
\epsilon (\Delta x(\epsilon))_n  & \leq & \epsilon (-c_n + (c_n - 1) +
O(\epsilon^2) \notag \\
& = &  - \epsilon + O(\epsilon^2) \notag \\
& < & 0, 
\end{eqnarray}
for $\epsilon > 0$ sufficiently small.
Combining  (\ref{1-1-eq1-2}), (\ref{1-1-eq1-3}) with 
\begin{equation}
\label{1-1-eq2}
x_n (1 - x_n) (x_n - a)  =  - a_{n,1}  (1 - \alpha) \epsilon+ 
O(\epsilon^{2}),  
\end{equation}
we see that to satisfy $F = 0$ with $\epsilon > 0$, 
sufficiently small we must have 
$  - a_{n,1} >  0 $.

For the inductive step, assume that  
$n \in S_1 $, $\hbox{dist}(I,n) = m$ imply 
$x_n(\epsilon) = 1 + a_{n,m+1} \epsilon^{m+1} + O(\epsilon^{m+2})$
with $a_{n,m+1} < 0$. 
Then let $n \in S_1$, $\hbox{dist}(I,n) = m + 1$.
Let $J_m$ be the set of sites $j \in \hbox{nbd}(n)$ 
satisfying $\hbox{dist}(I,n) = m$,
let $J_{m+1} $ be the set of sites $j \in \hbox{nbd}(n)$ 
satisfying $\hbox{dist}(I,n) \geq m + 1$. 

By Lemma \ref{asympt-1} 
we have $x_n(\epsilon) = 1 + O(\epsilon^{m+2})$. 
Then  
\begin{eqnarray}
\label{1-m-eq1}
\epsilon (\Delta x(\epsilon))_n  & = &
\epsilon [-c_n(1 + O(\epsilon^{m+2})) + \sum_{j \in J_m} x_j + 
 \sum_{ j \in J_{m+1} } x_j] \notag \\
 & = &  \epsilon [-c_n  + |J_m| + 
\sum_{j \in J_m}  a_{j,m+1} \epsilon^{m+1} + 
(c_n - |J_m|) + O(\epsilon^{m+2}) ] \notag \\
& = &  \sum_{j \in J_m}  a_{j,m+1} \epsilon^{m+2} 
+ O(\epsilon^{m+3}) ] \notag \\
& < & 0
\end{eqnarray}
for $\epsilon > 0$ sufficiently small, 
since $a_{j,m+1} < 0$, $\forall j \in J_m$
by the inductive hypothesis.
On the other hand, 
\begin{equation}
\label{1-1-eq2a}
x_n (1 - x_n) (x_n - a)  =  - (1 -a) a_{n,m+2}  \epsilon^{m+2} + 
O(\epsilon^{m+3}).
\end{equation}
By (\ref{1-m-eq1}), (\ref{1-1-eq2a}) to satisfy $F = 0$ 
we must have 
$- a_{n,m+2} > 0 $, as required.  
\end{proof}

We now prove Proposition \ref{large-epsilon-equilibria}

\begin{proof}
To study large $\epsilon$ solutions of 
$F(x,\epsilon) = 0$
we will equivalently examine 
$\mu \rightarrow 0^+$ solutions 
${\tilde F} (x,\mu) = 0$, where
\begin{equation}  
\label{Fp-zel-mu}
{\tilde F}_n(x,\mu) =  
(\Delta x)_n  + \mu f_n(x), 
\end{equation}
$n = 1, \ldots, N$. 

Then $F(x,\epsilon) = 0$, $\epsilon > 0$, is equivalent 
to ${\tilde F}(x,\mu) = 0$, with $ \mu = \epsilon^{-1}$.

Consider a a sequence 
$\{(x_n,\epsilon_n)\}_{n \in {\bf Z}^+} \in I^N \times {\bf R}^+$, 
satisfying $\epsilon_n \rightarrow \infty$, and 
$F(x_n,\epsilon_n) = 0$, $\forall n \in {\bf Z}^+$.
Such sequences clearly exist. 
Moreover 
$(x_n,\mu_n)$, with $\mu_n = (\epsilon_n)^{-1}$, satisfy 
${\tilde F}(x_n,\mu_n) = 0$, $\forall n \in {\bf Z}^N$.
The sequence of solutions  
$\{ (x_n,\mu_n) \}_{n > n_0} $ of ${\tilde F} = 0$
belongs to $ I^{N+1} =I^{N} \times [0,1]$ for some $n_0 > 0$, 
and by the compactness of $ I^{N+1}$ 
has a convergent subsequence in $ I^{N+1}$, 
denoted again as $\{(x_n,\mu_n)\}_{n \in {\bf Z}^+} $.
Let $(x_*,\mu_*)$ be the limit of this subsequence.
By the assumption $\epsilon_n \rightarrow \infty$, we have 
that $\mu_* = 0$. 
Also, ${\tilde F}:{\bf R}^{N+1} \rightarrow {\bf R}^{N+1}$ 
is continuous and therefore 
${\tilde F}(x_n,\mu_n) \rightarrow {\tilde F}(x_*,0)$
as $(x_n,\mu_n) \rightarrow (x_*,0)$. Therefore 
${\tilde F}(x_*,0) = 0$. Since ${\tilde F}(x,0) = \Delta x$ we 
have $x_* \in V \cap I^N $, where 
$V = \{c [1, \ldots, 1]^T \in {\bf R}^N: c \in {\bf R} \}$,
i.e. the kernel of $\Delta$. 

We show that $x_*$ can only be one of the 
$c[1,\ldots,1]^T$, with $c = 0$, $a$, or $1$.
Let $P $ the orthogonal projection of ${\bf R}^N$ onto $V$. 
Also let $W = I - P$, where $I$ the identity in ${\bf R}^N$. 
We apply $P$ and $ I - P$ to ${\tilde F} = 0$, and 
write $x = v + w$, 
with $v \in V$, $w \in W$. This decomposition is unique.
Using the facts that 
$\Delta$ and $P$ commute, and that $\Delta v = 0$,   
${\tilde F} = 0$ becomes 
\begin{equation}
\label{decomp-1}
  Pf(v+w) = 0, 
\end{equation}
\begin{equation}
\label{decomp-2}
\Delta w + \mu (I - P)f(v + w) = 0.
\end{equation}
Fix any $v \in V \cap I^N$. 
We use the implicit function theorem to continue  
the solution $(w,\mu) = (0,0)$ of 
(\ref{decomp-2}) to a solution with $\mu \neq 0 $. 
Then for $|\mu|$ sufficiently small there exists 
a one-parameter family of solutions $(w,\mu) = (h(\mu;v),\mu)$
of (\ref{decomp-2}), 
where $h(\cdot;v)$
is continuous in $\mu$, 
with $h(\mu;v)= O(\mu)$ as $\mu \rightarrow 0$ (uniformly in $v$).
The implicit function theorem also implies that these 
solutions are the only solutions 
of (\ref{decomp-2}) in a sufficiently small 
neighborhood of $(w,\mu) = (0,0)$ in 
$W \times {\bf R}$.
Similar considerations show that  
the function $h$ is continuous in $v$, 
$\forall v \in v \in V \cap I^N$. 

Thus all solutions of ${\tilde F}(x,\mu)$, 
with $x = v + w$, $v \in V$, $w \in W$, and
$w \rightarrow 0$, $\mu \rightarrow 0$, 
must be of the form 
$x = v + h(\mu,v)$, with $v$ a solution of 
\begin{equation}
\label{decomp-3}
  g(v,\mu) = Pf(v+h(\mu;v)) = 0, 
\end{equation}
by (\ref{decomp-1}).

Suppose that we have a sequence of solutions
$\{(v_n,\mu_n)\}_{n \in {\bf Z}^+}$ of $g(v,\mu) = 0$ with 
$v \in V \cap I^N $, and $ \mu_n \rightarrow 0$. 
By compactness
this sequence has a convergent subsequence.  
Denote its limit by $(v_*,0)$. By the
continuity of $h$, and therefore of $g$, $v_*$ must satisfy 
$g(v_*,0) =  Pf(v_*) = 0$, 
hence $v_* = v_r = c_r[1, \ldots, 1]^T$, 
$r = 1, 2, 3$ with 
$c_1 = 0$, $c_2 = a$, $c_3 = 1$. 
Applying the implicit function theorem again 
we check that each of the 
solutions $v_r$, $r = 1,2,3$, of $g(v,0) = 0$ is  
continued to a unique branch of solutions of 
$g(v,\mu) = 0$, with $(v,\mu)$. Each of these three
branches contains all possible solutions of $g= 0$ 
sufficiently near the 
respective $(v_r,0)$, $r = 1,2,3$.
By uniqueness   
these three local branches are subsets of the three trivial
branches $(v_r,\mu)$, $r = 1, 2, 3$, $\mu > 0 $, 
of solutions of ${\tilde F} = 0$.   
\end{proof}

\section{Comparison results for front propagation models}

We now consider the time dependant solutions of 
the Zeldovich equation (\ref{zeldovich}), and establish
qualitative comparison (or monotonicity) results for 
different solutions of the Zeldovich model, see Proposition 
\ref{compare-Zel-Zel}. An application is 
Corolary \ref{compare-static-sol}, a stability statement 
for some of the static solutions 
discussed in Corolary \ref{solution-pairs} 
of the previous section.
Another goal is to compare the Zeldovich model with the original 
Kermack-McKendrick system, and the intermediate
Fisher system.   
We show that the Fisher model describes a faster propagation
of the epidemic than both the Zeldovich and Kermack-McKendrick models, 
see Propositions \ref{compare-Zel-Fis}, \ref{compare-Fis-KmK} 
respectively.
In the next section we show some numerical examples.

The comparison statements below use a notion of ``partial 
order'' between configurations. In particular  
$ u < n$ (respectively $u \leq v$), with 
$u$, $v \in [0,1]^N$, 
will mean $u_n < v_n$ 
(respectively $u_n \leq v_n$), 
$\forall n \in \{1, \ldots, N\}$.
We also let ${\bf 0} = [0,\ldots,0]^T$, ${\bf 1} = [1, \ldots, 1]^T \in
[0,1]^N $. 
A ``larger'' configuration thus describes a state where the epidemic 
is more advanced at all sites.

\begin{proposition} 
\label{compare-Zel-Zel}
Let $T> 0$, $x$, $y: [0,T] \rightarrow [0,1]^N$ 
be two solutions of (either) the Zeldovich (\ref{zeldovich}) 
(or the Fisher (\ref{fisher})) equation, 
with initial conditions satisfying 
$ {\bf 0} <  x(0) < y(0) \leq {\bf 1}$. 
Then $x(t) < y(t)$, $\forall t \in [0,T]$. 
\end{proposition}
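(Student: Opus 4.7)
The plan is to reduce the comparison to a Metzler-type positivity argument applied to the difference $z(t) := y(t) - x(t)$. Subtracting the two evolution equations yields
\[ \dot z_k = \epsilon (\Delta z)_k + f(y_k(t)) - f(x_k(t)), \quad k = 1, \ldots, N, \]
where $f(s) = \gamma(1-s)s(s-a)$ in the Zeldovich case (and $f(s) = \gamma(1-s)s$ in the Fisher case). Since $x_k(t), y_k(t) \in [0,1]$ by hypothesis, a mean-value decomposition
\[ f(y_k(t)) - f(x_k(t)) = q_k(t)\, z_k(t), \qquad |q_k(t)| \le \sup_{[0,1]} |f'|, \]
is valid on $[0,T]$, so $z$ satisfies the linear non-autonomous system $\dot z = A(t)\, z$ with $A(t) := \epsilon \Delta + \mathrm{diag}(q(t))$.

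The key structural observation is that $A(t)$ is a Metzler matrix: its off-diagonal entries are $\epsilon\,\Delta_{ij} \in \{0,\epsilon\}$, hence non-negative. Equivalently, $A(t) + cI$ has non-negative entries for $c$ sufficiently large, and therefore the propagator of $\dot z = A(t)\, z$ preserves the non-negative cone. This already delivers $z(t) \ge 0$ componentwise on $[0,T]$.

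To upgrade to strict inequality I would run a first-time-of-contact argument that exploits connectivity of the graph. Define $t_* := \inf\{t \in (0,T] : z_k(t) = 0 \text{ for some } k\}$, with $t_* = +\infty$ if the set is empty. Continuity and the strict inequality $z(0) > 0$ give $t_* > 0$; assume for contradiction that $t_* \le T$, and pick $k_0$ with $z_{k_0}(t_*) = 0$. Since $z \ge 0$ on $[0,t_*]$, the component $z_{k_0}$ attains a minimum at $t_*$, so $\dot z_{k_0}(t_*) \le 0$. On the other hand the ODE gives $\dot z_{k_0}(t_*) = \epsilon \sum_{j \sim k_0} z_j(t_*) \ge 0$, the diagonal contribution vanishing because $z_{k_0}(t_*) = 0$. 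Combined, these force $z_j(t_*) = 0$ for every neighbour $j$ of $k_0$. Iterating along paths in the (connected) graph propagates the equality to every node, so $z(t_*) \equiv 0$, i.e.\ $x(t_*) = y(t_*)$; backward uniqueness for the Zeldovich (resp.\ Fisher) ODE then forces $x(0) = y(0)$, contradicting $x(0) < y(0)$. Hence $t_* > T$ and $z(t) > 0$ throughout $[0,T]$.

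I expect the main technical obstacle to be this propagation-by-connectivity step; for a disconnected network one simply applies the same argument to each connected component, and strict positivity is preserved exactly on the components where $z(0)$ is strictly positive. Note that the hypothesis $x(0) > {\bf 0}$ plays no role in the comparison itself — only $x(0) < y(0)$ together with $x(t), y(t) \in [0,1]^N$ is actually used.
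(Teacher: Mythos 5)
Your argument is correct, but it takes a genuinely different route from the paper's. The paper proves the weak inequality $x(t)\le y(t)$ by discretizing in time: Lemma \ref{comparison-Zel-Euler} shows that one explicit Euler step preserves the order of two configurations once $1+\Delta t\,(-\epsilon n_{\max}+\min_{x\in[0,1]}f_Z'(x))>0$, and the order is then passed to the continuous trajectories through the convergence of the Euler method (Lemmas \ref{Euler-meth-lem} and \ref{monotone-limits}); strictness is obtained by a one-line appeal to backward uniqueness. You instead work directly with the continuous flow: the mean-value factorization $f(y_k)-f(x_k)=q_k(t)z_k(t)$ turns $z=y-x$ into a solution of a linear cooperative (Metzler) system, and positivity follows from cone invariance together with a first-time-of-contact argument (the cone-invariance step is in fact redundant, since your contact argument alone yields both the weak and the strict inequality). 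Each approach buys something. The Euler route is uniform across the paper's three comparison results --- Propositions \ref{compare-Zel-Fis} and \ref{compare-Fis-KmK} reuse the same machinery in situations where the two equations have different nonlinearities, so the difference is not of pure mean-value form and an extra sign argument is needed --- and it simultaneously certifies that the numerical scheme used in Section 5 is itself order-preserving. Your route is shorter, and at the strictness step it is actually more complete than the paper's: backward uniqueness only produces a contradiction once the \emph{entire} vectors coincide at some time, and your propagation of the zero set along the edges of the connected graph is exactly the step that upgrades a single touching component to $z(t_*)\equiv 0$; the paper leaves this implicit. Your closing remarks --- treating a disconnected network component by component, and the observation that the hypothesis $x(0)>{\bf 0}$ is not actually used in the comparison --- are also correct.
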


Since both vectors 
${\bf 0}$, ${\bf 1}$ are static solutions of the 
Zeldovich and Fisher equations, Lemma \ref{invariance-of-unit-cube}
is a special case of
Proposition \ref{compare-Zel-Zel}.

\begin{corollary}
\label{compare-static-sol}
Let $T > 0$.
Let $x$, $y \in [0,1]^N$ 
be two static solutions of the Zeldovich 
equation satisfying 
$x < y$, and let $u: [0,T] \rightarrow [0,1]^N$ 
be a solution
of the Zeldovich equation with initial condition $u(0)$ 
satisfying $ x <  u(0) < y$. Then 
$x < u(t) < y$, $\forall t \in [0,T]$.
\end{corollary}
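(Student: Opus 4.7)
The plan is to apply Proposition~\ref{compare-Zel-Zel} twice, treating each static solution as a constant-in-time solution of the Zeldovich evolution equation on $[0,T]$. This reduces the corollary to two instances of the already-established monotonicity statement.

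First, I would apply Proposition~\ref{compare-Zel-Zel} to the pair consisting of the lower static solution $x$ (viewed as the time-independent solution $t \mapsto x$) and the time-dependent solution $u$. The initial data $x$ and $u(0)$ satisfy $x < u(0)$ with both configurations in $[0,1]^N$, so the Proposition gives $x < u(t)$ for all $t \in [0,T]$. Second, I would apply Proposition~\ref{compare-Zel-Zel} to the pair consisting of $u$ and the upper static solution $y$ (viewed as the time-independent solution $t \mapsto y$), with initial data satisfying $u(0) < y$; this yields $u(t) < y$ for all $t \in [0,T]$. Concatenating the two inequalities produces the desired conclusion $x < u(t) < y$ on $[0,T]$.

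The main subtlety is aligning the strict-inequality hypothesis of Proposition~\ref{compare-Zel-Zel} (which demands ${\bf 0} < x(0) < y(0) \leq {\bf 1}$) with boundary cases in which either $x = {\bf 0}$ or $y = {\bf 1}$. In such cases strict positivity of the middle configuration $u(0)$ is still guaranteed by $u(0) > x \geq {\bf 0}$ (and similarly $u(0) < y \leq {\bf 1}$), but the outer static configuration then sits on the boundary of the invariant cube. The fix is either to invoke the inward-pointing vector field at the faces of $[0,1]^N$ noted after Lemma~\ref{invariance-of-unit-cube}, which preserves the strict inequality between a boundary static solution and $u(t)$, or to approximate $x$ and $y$ by strictly interior configurations, apply Proposition~\ref{compare-Zel-Zel}, and pass to the limit. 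Either route carries the two-application strategy through and yields the stated conclusion; I expect the minor bookkeeping at these extreme cases to be the only non-automatic step.
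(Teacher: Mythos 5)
Your proposal is correct and matches the paper's intended argument: the corollary is stated as an immediate consequence of Proposition~\ref{compare-Zel-Zel} (the paper gives no separate proof), and the double application with the static solutions viewed as constant-in-time trajectories is exactly the route implied. Your extra care about the boundary cases $x={\bf 0}$ or $y={\bf 1}$, where the strict-inequality hypothesis of the Proposition is not literally met, is a reasonable refinement; the paper itself glosses over this point when it declares Lemma~\ref{invariance-of-unit-cube} a special case of the same Proposition.
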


The existence of pairs of 
static solutions of the Zeldovich equation 
satisfying $x < y$, is 
shown in the previous 
section, in Corollary \ref{solution-pairs}.

We now compare solutions of the Fisher and Zeldovich
equations. 

\begin{proposition}
\label{compare-Zel-Fis}
Let $T > 0$. Let $x_F$, $x_Z:[0,T] \rightarrow [0,T]^N$
be solutions of the Fisher (\ref{fisher}), 
and Zeldovich (\ref{zeldovich}) equations
respectively, with corresponding initial conditions satisfying 
$ {\bf 0} <  x_Z(0) \leq x_F(0) \leq {\bf 1}$. 
Then $x_Z(t) \leq   x_F(t)$, $\forall t \in (0,T]$.
\end{proposition}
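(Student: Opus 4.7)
The plan is to reduce the inequality $x_Z(t) \leq x_F(t)$ to a linear comparison principle for the difference $w(t) = x_F(t) - x_Z(t)$, driven by a non-negative source that captures the gap between the two reaction terms. By Lemma \ref{invariance-of-unit-cube}, both trajectories stay in $[0,1]^N$ on $[0,T]$, so all algebraic inequalities below are applied in the regime where they hold.

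The first step is a pointwise comparison of the nonlinearities on $[0,1]$. With $f_F(x) = \gamma(1-x)x$ and $f_Z(x) = \gamma(1-x)x(x-a)$, one computes
\begin{equation}
f_F(x) - f_Z(x) = \gamma (1-x) x (1 - x + a),
\notag
\end{equation}
which is nonnegative for $x \in [0,1]$ since every factor is. The second step is to write, componentwise,
\begin{equation}
{\dot w} = \epsilon \Delta w + [f_F(x_F) - f_F(x_Z)] + [f_F(x_Z) - f_Z(x_Z)],
\notag
\end{equation}
and apply the mean value theorem to the first bracket, giving a diagonal matrix $B(t)$ with bounded entries such that $f_F(x_F) - f_F(x_Z) = B(t) w$. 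The second bracket is a source $g(t) \geq 0$ by the first step. Hence $w$ satisfies the linear inhomogeneous evolution ${\dot w} = L(t) w + g(t)$ with $L(t) = \epsilon \Delta + B(t)$, $w(0) \geq 0$, and $g(t) \geq 0$.

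The third step is a positivity argument for this linear system. Because $\Delta$ has nonnegative off-diagonal entries and $B(t)$ is diagonal, $L(t) + c I$ has all entries nonnegative for $c$ large enough, so the evolution operator generated by $L(t)$ is Metzler and preserves the cone $[0,\infty)^N$. By Duhamel's formula applied to $L(t) + cI$ (absorbing the shift into an overall exponential factor) we conclude $w(t) \geq 0$ on $[0,T]$. Equivalently, one can argue by contradiction: perturb $w$ to $w + \delta e^{Kt} \mathbf{1}$ with $K$ large, suppose the first time some component of the perturbed quantity hits zero occurs at $(t_*,n_*)$; then at $n_*$ the neighbors all have $w_j \geq w_{n_*}$, so $(\Delta w)_{n_*}(t_*) \geq 0$, forcing the time derivative at $(t_*,n_*)$ to be nonnegative and contradicting the strict decrease demanded by the perturbation.

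The main obstacle is organizational rather than conceptual: the strict inequality $\mathbf{0} < x_Z(0)$ is needed only to guarantee the trajectory stays away from the trivial fixed point (via Lemma \ref{invariance-of-unit-cube}), while the conclusion is non-strict and so requires only a weak maximum principle. Care must be taken that $B(t)$ is genuinely diagonal so the off-diagonal sign structure of $\Delta$ is not spoiled, which is the essential feature that makes the positivity of the propagator available.
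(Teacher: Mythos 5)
Your proof is correct, and its central algebraic step --- splitting $f_F(x_F) - f_Z(x_Z)$ into $[f_F(x_F) - f_F(x_Z)] + [f_F(x_Z) - f_Z(x_Z)]$ and noting that the second bracket is nonnegative on $[0,1]^N$ --- is exactly the identity the paper uses in (\ref{Fis-Zel-difference}); your factorization $f_F(x)-f_Z(x)=\gamma(1-x)x(1-x+a)$ is a clean way to see that sign. Where you genuinely diverge is in how positivity of the difference is propagated in time. The paper stays with first-order explicit Euler iterates: Lemma \ref{comparison-Zel-Fis-Euler2} shows that for a small enough step the update preserves the order, essentially because $1+\Delta t(-\epsilon n_{max}+\min_{[0,1]}f_F')>0$ makes the one-step map order-preserving, and the non-strict inequality for the continuous flow is then recovered by passing to the limit via Lemmas \ref{Euler-meth-lem} and \ref{monotone-limits}. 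You instead run a continuous-time weak maximum principle for the Metzler system ${\dot w}=(\epsilon\Delta+B(t))w+g(t)$, via positivity of the propagator or the first-crossing perturbation $w+\delta e^{Kt}{\bf 1}$. Both are sound; your route is shorter and dispenses with the Euler convergence and order-passing lemmas, while the paper's discrete route is more elementary (no propagator theory or maximum principle needed) and the same machinery simultaneously yields the invariance of $[0,1]^N$ (Lemma \ref{invariance-of-unit-cube}) that you quote at the outset. On that point, mind the logical order: the paper deduces that invariance lemma from Proposition \ref{compare-Zel-Zel}, so your appeal to it is legitimate but presupposes that the earlier Zeldovich--Zeldovich comparison has already been established. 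Finally, you are right that the hypothesis ${\bf 0}<x_Z(0)$ plays no role in the weak inequality; the paper likewise obtains only $\leq$ here, remarking that the backwards-evolution argument used for Proposition \ref{compare-Zel-Zel} is unavailable when comparing two different equations.
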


\begin{proposition}
\label{compare-Fis-KmK}
Let $T > 0$. Let $x_F:[0,T] \rightarrow [0,1]^N$ be a solution
of the Fisher equation (\ref{fisher}),
and let $(s, i):[0,T] \rightarrow \tau^N$ 
satisfy the Kendrick-McKormack
system (\ref{km3}).
Suppose also that the corresponding 
initial conditions satisfy $i(0) <  x_F(0)$. 
Then $i(t) \leq x_F(t)$, $\forall t \in [0,T]$. 
\end{proposition}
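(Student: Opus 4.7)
The plan is to cast $i$ as a subsolution of the Fisher equation and then apply a cooperative (quasi-monotone) comparison principle to $w=x_F-i$. The non-negativity of the off-diagonal entries of the graph Laplacian (the adjacency structure) is what makes the comparison work on an arbitrary network.

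First, I would use the hypothesis $(s(t),i(t))\in\tau^N$ to read off the pointwise bounds $s_n\ge 0$, $i_n\ge 0$, $s_n+i_n\le 1$ at every node $n$ and every $t\in[0,T]$. Since $\beta\ge 0$, these give $\gamma s_n i_n\le \gamma(1-i_n)i_n$ and $-\beta i_n\le 0$, so the $i$-equation of (\ref{km3}) yields the componentwise subsolution inequality $i_t\le \epsilon\Delta i+\gamma(1-i)\,i$.

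Next, I would subtract this from the Fisher equation satisfied by $x_F$ and use the algebraic identity $(1-x_F)x_F-(1-i)i=w\,(1-x_F-i)$ to obtain the linear cooperative differential inequality $w_t\ge \epsilon\Delta w + D(t)\,w$, where $D(t)$ is the bounded diagonal matrix with entries $\gamma(1-x_F(n,t)-i(n,t))$. The matrix $A(t):=\epsilon\Delta+D(t)$ has non-negative off-diagonal entries (those of $\epsilon\Delta$, equal to $\epsilon$ on edges and $0$ elsewhere) and a bounded diagonal, so it is quasi-monotone on $[0,T]$.

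Finally, the hypothesis $x_F(0)>i(0)$ gives $w(0)>\mathbf{0}$ componentwise, and I would conclude by the standard perturbation/first-contact argument: set $\tilde w_\eta(t)=w(t)+\eta e^{Kt}\mathbf{1}$ with $\eta>0$ and $K$ larger than $\sup_{t\in[0,T],n}|D_{nn}(t)|$. Using $\Delta\mathbf{1}=0$, one finds that $\tilde w_\eta$ satisfies the \emph{strict} inequality $\tilde w_\eta'>A(t)\,\tilde w_\eta$. If there were a first $t^*\in(0,T]$ at which some component $\tilde w_{\eta,n}$ vanished, at that instant one would have $\tilde w_{\eta,n}'(t^*)\le 0$ yet $[A(t^*)\tilde w_\eta(t^*)]_n=\epsilon\sum_{j\sim n}\tilde w_{\eta,j}(t^*)\ge 0$, contradicting the strict bound. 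Hence $\tilde w_\eta>\mathbf{0}$ on $[0,T]$, and letting $\eta\to 0^+$ yields $w(t)\ge\mathbf{0}$, i.e.\ $i(t)\le x_F(t)$ for all $t\in[0,T]$.

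The main obstacle is exactly the passage from the non-strict differential inequality $w_t\ge A(t)w$ to a componentwise sign on $w$: the first-contact argument genuinely needs strict inequality, which forces the $\eta e^{Kt}\mathbf{1}$ perturbation and uses $\mathbf{1}\in\ker\Delta$ in an essential way. Since the same quasi-monotone comparison ought to drive Propositions \ref{compare-Zel-Zel} and \ref{compare-Zel-Fis}, it would probably be cleanest to isolate the argument as a single lemma on cooperative ODE systems on a graph and invoke it three times.
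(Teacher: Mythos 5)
Your proof is correct, but it takes a genuinely different route from the paper. The paper does not argue in continuous time at all: it discretizes each equation with the explicit Euler scheme, proves (Lemma \ref{comparison-KmK-Fis-Euler}) that for a sufficiently small step the discrete update preserves the componentwise order between $i^M$ and $x^M$ (and keeps the iterates in $\tau^N$ and $[0,1]^N$), and then passes to the limit using the convergence of the Euler method (Lemma \ref{Euler-meth-lem}) together with the fact that uniform limits of ordered sequences are ordered (Lemma \ref{monotone-limits}). Your argument instead stays with the ODEs: you exhibit $i$ as a subsolution of Fisher via $s+i\le 1$, $\beta i\ge 0$, pass to the cooperative linear inequality $w_t\ge(\epsilon\Delta+D(t))w$ for $w=x_F-i$ using the factorization $(1-x)x-(1-y)y=(x-y)(1-x-y)$, and close with the standard $\eta e^{Kt}\mathbf{1}$ perturbation and first-contact argument. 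Both proofs hinge on the same structural fact --- the off-diagonal entries of the graph Laplacian are non-negative --- and your suggestion to isolate a single quasi-monotone comparison lemma serving Propositions \ref{compare-Zel-Zel}, \ref{compare-Zel-Fis} and \ref{compare-Fis-KmK} mirrors exactly how the paper reuses its Euler lemmas. The trade-off: your route is shorter and avoids the double limit, but it takes the invariance of $[0,1]^N$ and $\tau^N$ as a hypothesis, whereas the paper's Euler iteration establishes that invariance along the way and is then used to deduce Lemma \ref{invariance-of-unit-cube}; if you wanted your lemma to also deliver invariance you would need to run a separate first-contact argument at the boundary of the cube and the triangle.
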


The above comparison 
statements follow from analogous statements
for discrete time approximations of the solutions of 
the three equations. The approximations we use  
are obtained by the first-order explicit 
Euler method. 

Consider a general ODE 
${\dot z} = F(z)$ in ${\bf R}^K$ with initial
condition $z(0)$, and the corresponding solution 
$z$ in the interval $[0,T]$.
Fix a positive integer $M > 1$, and let  
$\Delta t = T / M$. 
Let $z^M$ be an array of $M+1$ vectors 
$z^M(m \Delta t) \in {\bf R}^K$, 
$m \in \{0, \ldots, M \}$, defined iteratively by
$z^M(0) = z(0)  $,  
\begin{equation}
\label{Euler-iterate}
z^M((m+1) \Delta t) = z^M(m \Delta t) + F(z^M(m \Delta t)) \Delta t,
\quad m \in {0, \ldots, M}.
\end{equation}
(The dependence of $\Delta t$ on $M$ is not explicit in this notation.)
Thus $z^M$ is the numerical trajectory 
obtained by the first order, explicit Euler method
with constant time-step 
${\Delta t} = {T/M} $ over an interval $[0,T]$.
We recall a standard convergence result 
for the Euler method (see e.g. \cite{I08}):

\begin{lemma} 
\label{Euler-meth-lem}
Consider the ODE 
$ {\dot z} = F(z)$ in ${\bf R}^K$ with initial
condition $z(0)$, and assume that the solution
$z(t)$, $t \in [0,T]$, exists and is unique. 
Assume also that $F$ is $C^1$ in ${\bf R}^K$. 
For every integer $M > 1$ let $z^M$ be as in 
(\ref{Euler-iterate}) with fixed  
time-step $ {\Delta t} = {T / M}$ over the interval $[0,T]$.
Then 
\begin{equation}
\label{Euler-conv}
\lim_{M \rightarrow \infty} 
\max_{m \in \{0, \ldots, M\} } 
||z(m {\Delta t}) -  z^M( m {\Delta t})|| = 0,
\end{equation}
where $|| \cdot ||$ denotes the norm in 
${\bf R}^K$.
\end{lemma}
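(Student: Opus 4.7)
The plan is to prove this by the standard discrete Gronwall / one-step-error-plus-stability argument for the explicit Euler method, combined with a bootstrap that keeps the numerical trajectory in a compact region where we have good control of $F$.

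First I would fix a compact neighborhood of the exact trajectory. Since $z:[0,T]\to{\bf R}^K$ is continuous, its image $z([0,T])$ is compact; let $\Omega$ be the closed $1$-neighborhood of this image. Because $F$ is $C^1$, $F$ is bounded on $\Omega$ and Lipschitz there with some constant $L>0$. In particular $\dot z = F(z)$ is Lipschitz along the trajectory, and $\ddot z(t)= DF(z(t))F(z(t))$ is bounded on $[0,T]$ by some constant $C_2$. This $C_2$ will govern the local truncation error.

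Next I would estimate the local truncation error. For each $m$, Taylor's theorem applied to the true solution gives
\begin{equation}
\tau_m \;=\; z((m{+}1){\Delta t}) - z(m{\Delta t}) - F(z(m{\Delta t})){\Delta t}
\notag
\end{equation}
with $\|\tau_m\|\le \tfrac12 C_2 ({\Delta t})^2$. Subtracting the Euler update (\ref{Euler-iterate}) and writing $e_m = z(m{\Delta t})-z^M(m{\Delta t})$ yields, as long as $z^M(m{\Delta t})\in\Omega$,
\begin{equation}
e_{m+1} \;=\; e_m + \bigl[F(z(m{\Delta t}))-F(z^M(m{\Delta t}))\bigr]{\Delta t} + \tau_m,
\notag
\end{equation}
so $\|e_{m+1}\| \le (1+L{\Delta t})\|e_m\| + \tfrac12 C_2 ({\Delta t})^2$. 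With $e_0=0$, the discrete Gronwall inequality gives
\begin{equation}
\|e_m\| \;\le\; \frac{C_2\,{\Delta t}}{2L}\bigl[(1+L{\Delta t})^m - 1\bigr]
\;\le\; \frac{C_2\,{\Delta t}}{2L}\bigl(e^{LT}-1\bigr),
\notag
\end{equation}
which tends to $0$ as $M\to\infty$, uniformly in $m\in\{0,\ldots,M\}$. This is exactly (\ref{Euler-conv}).

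The step I expect to be the main obstacle, and where I would take care, is justifying that the Lipschitz bound for $F$ can actually be applied at each iterate, i.e.\ that $z^M(m{\Delta t})$ remains in the compact set $\Omega$ throughout the iteration. I would handle this by a bootstrap: let $M_0$ be such that $\frac{C_2 T}{2L}\bigl(e^{LT}-1\bigr)/M < 1$ for all $M\ge M_0$. For any such $M$ I would argue by induction on $m$ that $z^M(m{\Delta t})\in\Omega$; the inductive hypothesis lets me apply the Lipschitz/Gronwall estimate above up to step $m$, which gives $\|e_m\|<1$ and hence $z^M(m{\Delta t})$ lies within distance $1$ of $z(m{\Delta t})\in z([0,T])$, so it lies in $\Omega$ and the induction continues. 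Thus (\ref{Euler-conv}) holds for $M\ge M_0$, which suffices for the stated limit.
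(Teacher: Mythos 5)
Your proof is correct: the one-step Taylor estimate for the local truncation error, the Lipschitz stability recursion, the discrete Gronwall bound, and the bootstrap keeping $z^M(m\Delta t)$ in the compact neighborhood $\Omega$ are all handled properly, and together they give exactly (\ref{Euler-conv}). The paper itself offers no proof of this lemma --- it is stated as a standard convergence result with a citation to Iserles' textbook --- and what you have written is precisely the standard argument that reference contains, so there is nothing to compare beyond noting that you have supplied the details the authors chose to omit.
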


In the following lemma we compare 
two Euler approximates of either the 
Zeldovich or the Fisher equations. 
We see that they preserve the order of the
initial conditions. 

\begin{lemma}
\label{comparison-Zel-Euler}
Let $T > 0$, and let $M > 1$ be an integer. 
Let $x^M$, $y^M$ be the 
Euler approximations with
time-step ${\Delta t} = T /M $ over the interval $[0,T]$ 
of two trajectories of either the Zeldovich (\ref{zeldovich})
or the Fisher (\ref{fisher}) equations.
Assume that $x^M(0)  < y^M(0)$, 
with $ x^M(0)$, $y^M(0) $ in $[0,1]^N$.
Then 
$x^M(m {\Delta t}) < y^M(m {\Delta t})$, 
$\forall m \in \{1, \ldots, M \}$, 
provided $M$ is sufficiently large.   
\end{lemma}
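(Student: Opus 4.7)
The plan is to set $d^m := y^M(m\Delta t) - x^M(m\Delta t) \in {\bf R}^N$ and prove componentwise that $d^m > 0$ for every $m \in \{0, \ldots, M\}$ by induction on $m$. The base case $d^0 > 0$ is the hypothesis. For the inductive step I rewrite the Euler step for $d^m$ as follows. Using $(\Delta d^m)_n = \sum_{j \in \hbox{nbd}(n)} d^m_j - c_n d^m_n$, and exploiting the fact that each reaction term $f_n(x)$ depends only on the $n$-th coordinate $x_n$ (both for Zeldovich and Fisher), the mean value theorem gives
$$f_n(y^M(m\Delta t)) - f_n(x^M(m\Delta t)) = f_n'(\xi^m_n)\, d^m_n$$
for some $\xi^m_n$ between $x^M_n(m\Delta t)$ and $y^M_n(m\Delta t)$. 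Substituting into the Euler update yields the key recurrence
$$d^{m+1}_n = \bigl[1 + \Delta t\bigl(-\epsilon c_n + f_n'(\xi^m_n)\bigr)\bigr]\, d^m_n + \epsilon\, \Delta t \sum_{j \in \hbox{nbd}(n)} d^m_j.$$

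The remaining task is to choose $M$ so that every bracket on the right is strictly positive, for then $d^m > 0$ makes both terms nonnegative and the first term strictly positive, giving $d^{m+1} > 0$ componentwise. Since $f$ is a fixed polynomial, hence locally Lipschitz, and the initial data $x^M(0), y^M(0)$ lie in the bounded set $[0,1]^N$, a standard one-step estimate for the Euler method combined with Gr\"onwall's inequality shows that for all $M$ sufficiently large the iterates $x^M(m\Delta t), y^M(m\Delta t)$ stay in a fixed compact set $K \subset {\bf R}^N$, independent of $M$ and $m \in \{0, \ldots, M\}$. Setting $L := \max_{n} \sup_{\xi \in K} |f_n'(\xi)|$ and $c_{\max} := \max_n c_n$, I then take $M$ large enough that $\Delta t = T/M$ also satisfies $\Delta t (\epsilon c_{\max} + L) < 1$. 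For such $M$ the bracket is bounded below by $1 - \Delta t(\epsilon c_{\max} + L) > 0$ uniformly in $n$ and $m$, so the induction closes.

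The main subtlety is avoiding circularity: it is tempting to invoke Lemma \ref{invariance-of-unit-cube} to keep the iterates in $[0,1]^N$ and bound $f_n'$ there, but that lemma is itself a consequence of the continuous-time comparison principle that this Euler lemma is ultimately used to establish. The clean resolution is to use only that $F(x) = \epsilon \Delta x + f(x)$ is locally Lipschitz: the direct Gr\"onwall-type bound above controls the iterates in a fixed compact set without appealing to invariance, so $L$ is well defined and all subsequent steps proceed as described.
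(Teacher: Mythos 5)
Your key recurrence, the mean value theorem applied to the single-site nonlinearity, the sign decomposition of the graph Laplacian into the diagonal part $-c_n d^m_n$ and the nonnegative off-diagonal sum, and the choice of $M$ making $1+\Delta t(-\epsilon c_{\max}+ \min f')>0$ are exactly the paper's argument (its inequality for $y_k^M(\Delta t)-x_k^M(\Delta t)$ is your recurrence with the off-diagonal sum dropped). You are also right that one must not simply quote Lemma \ref{invariance-of-unit-cube} to keep the iterates in $[0,1]^N$. But your proposed substitute is the one step that does not hold up as stated. For the Zeldovich and Fisher nonlinearities the vector field has cubic (resp.\ quadratic) growth, so the naive discrete Gr\"onwall recursion $a_{m+1}\leq a_m+\Delta t\,C(1+a_m^3)$ does not yield a bound on $[0,T]$ that is uniform in $M$ for arbitrary $T$; and the standard ``Euler iterates track the exact solution, hence stay in a compact set'' argument presupposes that the exact solution exists on all of $[0,T]$, which for these equations is itself usually obtained from the invariance of $[0,1]^N$ --- precisely the circularity you set out to avoid. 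So the claim that $L=\sup_K|f_n'|$ is well defined on an $M$-independent compact set $K$ is not justified by what you wrote, and it is the only nontrivial point of the lemma.

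The paper closes this gap with a purely discrete observation: ${\bf 0}$ and ${\bf 1}$ are fixed points of the Euler map itself (since $\epsilon\Delta{\bf 0}+f({\bf 0})=\epsilon\Delta{\bf 1}+f({\bf 1})=0$), so your one-step comparison applied to the pairs $({\bf 0},x^M)$ and $(y^M,{\bf 1})$ shows ${\bf 0}\leq x^M(\Delta t)<y^M(\Delta t)\leq{\bf 1}$ at every step. The induction therefore simultaneously proves order preservation and confinement to $[0,1]^N$, which is what legitimizes taking $\xi^m_n\in[0,1]$ and the uniform bound $\min_{x\in[0,1]}f'(x)$, with a step size independent of the initial data. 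Your proof becomes correct, and identical in substance to the paper's, once you replace the Gr\"onwall paragraph by this self-bootstrapping comparison against the constant solutions.
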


The lemma also implies that 
the Euler approximations of the Zeldovich and
Fisher equations with initial conditions in 
$[0,1]^N$ stay in $[0,1]^N$, 
provided the time-step is small enough.
 
The proof below shows that 
this step size does not depend on the initial conditions, 
it only depends on $\Delta$, i.e. the graph, and 
the functions 
\begin{equation}
\label{nonlinearities}
f_F(x) = (1-x)x, \quad f_Z(x) = (1-x)x(x - \alpha) 
\end{equation}
in the 
equations. The same comment applies to 
Lemma \ref{comparison-Zel-Fis-Euler}.

\begin{proof} 
Consider the first step of the iteration for the Zeldovich
equation, starting with two initial conditions 
$ x^M(0) < y^M(0) $ in $[0,1]^N$.

We have
\begin{eqnarray}
y^M(\Delta t) - x^M(\Delta t) & = & y^M(0) - x^M(0) + \notag \\
&   &  \Delta t [ \epsilon \Delta (y^M(0) - x^M(0)) + 
f_Z(y^M(0)) - f_Z(x^M(0))]. \notag
\end{eqnarray}
Examining the components of the 
$y^M(\Delta t)$, $x^M(\Delta t)$ we have that for every 
$k \in \{1, \ldots, N \}$,   
\begin{equation}
\label{Euler-comp-step-1}
y_k^M(\Delta t) - x_k^M(\Delta t) \geq 
[1 +  \Delta t (- \epsilon n_k + f_Z'({\tilde x}_k))]
(y_k^M(0) - x_k^M(0)),  
\end{equation}
where 
${\tilde x}_k \in [0,1]$, and  
$ - n_k = \Delta_{k,k}$. 

To maintain the 
$ y_k^M(\Delta t) - x_k^M(\Delta t)$ 
positive it  
suffices that 
\begin{equation}
1 +  \Delta t ( - \epsilon n_{max} + 
\min_{x \in [0,1]} f_Z'(x) > 0, 
\end{equation}  
with $n_{max} = \max_{k \in \{1, \ldots, N\}} n_k$.
This can be achieved for $M$ sufficiently large, 
and independent of $y^M(0)$, $x^M(0)$.

Applying this argument to the case where either  
$x^M(0) = {\bf 0}$, or $y^M(0) = {\bf 1}$, 
both static solutions of the Zeldovich equation, 
we then have 
$ {\bf 0} \leq x^M(\Delta t) < y^M(\Delta t) < {\bf 1}$, 
which also implies $ x^M(0)$, $y^M(0) \in [0,1]^N$.
We can iterate the argument for all remaining steps, 
with the same step size $T / M$.
The Fisher case is treated similarly. 
\end{proof}

Similarly we compare Euler approximates of
the Zeldovich and Fisher equations. 
We see that the Fisher approximations propagate
faster. The proof also shows that 
$(s(t),i(t))$ remains in $\tau^N$ for all times.

\begin{lemma}
\label{comparison-Zel-Fis-Euler2}
Let $T > 0$, and let $M > 1$ be an integer. 
Let $x_Z^M$, $x_F^M$ be the Euler approximations with
time-step ${\Delta t} = T /M $ over the interval $[0,T]$
of the Zeldovich (\ref{zeldovich})
and Fisher (\ref{fisher}) equations respectively.
Assume that $x_Z^M(0)  \leq x_F^M(0)$, with 
$x_Z^M(0)$, $x_F^M(0)$ in  $[0,1]^N$. Then 
$x_Z^M(m {\Delta t}) < x^F(m {\Delta t})$, 
$\forall m \in \{1, \ldots, M \}$, 
provided $M$ is sufficiently large.  
\end{lemma}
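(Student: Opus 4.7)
The plan is to mirror the proof of Lemma \ref{comparison-Zel-Euler}, but now the key algebraic input is the pointwise inequality $f_F(x) - f_Z(x) = \gamma(1-x)x(1-x+a)$, which is nonnegative on $[0,1]$ (and strictly positive on $(0,1)$). This is what should force the Fisher iterate to outrun the Zeldovich iterate.

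First, I would analyze a single Euler step starting from $x_Z^M(0) \le x_F^M(0)$ in $[0,1]^N$. Writing
\begin{equation}
x_F^M(\Delta t) - x_Z^M(\Delta t)
= (x_F^M(0)-x_Z^M(0)) + \Delta t\,[\epsilon \Delta(x_F^M(0)-x_Z^M(0)) + f_F(x_F^M(0)) - f_Z(x_Z^M(0))], \notag
\end{equation}
I would split $f_F(x_F^M(0)) - f_Z(x_Z^M(0)) = [f_F(x_F^M(0)) - f_F(x_Z^M(0))] + [f_F(x_Z^M(0)) - f_Z(x_Z^M(0))]$, apply the mean value theorem to the first bracket, and use the nonnegativity of $f_F - f_Z$ on $[0,1]$ for the second. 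At component $k$ I would also split $(\Delta u)_k = -n_k u_k + \sum_{j\in\mathrm{nbd}(k)} u_j$ and drop the nonnegative off-diagonal sum. This yields an inequality of the form
\begin{equation}
x_{F,k}^M(\Delta t) - x_{Z,k}^M(\Delta t) \ge [1 + \Delta t(-\epsilon n_{\max} + \min_{[0,1]} f_F')](x_{F,k}^M(0) - x_{Z,k}^M(0)) + \Delta t\,(f_F - f_Z)(x_{Z,k}^M(0)), \notag
\end{equation}
and both terms on the right are $\ge 0$ once $M$ is large enough that the bracket is positive, with the choice of $M$ depending only on $\Delta$, $\epsilon$, $a$, $\gamma$ and not on the data.

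Next I would confirm that both iterates remain in $[0,1]^N$ under the same time step, so that the argument can be repeated. Applying the one-step estimate with $x_Z^M(0) = \mathbf{0}$ (a static Zeldovich solution) gives $\mathbf{0} \le x_F^M(\Delta t)$, and applying it with $x_F^M(0) = \mathbf{1}$ (a static Fisher solution) gives $x_Z^M(\Delta t) \le \mathbf{1}$; these are exactly the kind of envelope comparisons used in the previous lemma. With this, the induction on $m$ closes: the single-step inequality applied at each $m$ with the same $\Delta t$ yields $x_Z^M(m\Delta t) \le x_F^M(m\Delta t)$ for all $m$, and strictness at $m \ge 1$ follows from the strictly positive contribution of $(f_F - f_Z)(x_Z^M((m-1)\Delta t))$ at any component where $x_Z^M((m-1)\Delta t)$ lies in $(0,1)$; the edge case where a component is at $0$ or $1$ is handled by the Lipschitz/MVT term pushing neighbouring sites off the boundary in the next step, which is exactly where the graph coupling matters.

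The main obstacle, I expect, is not the inductive step itself but ensuring that the choice of $M$ (equivalently $\Delta t$) is uniform: it must be large enough to keep the bracket $[1+\Delta t(-\epsilon n_{\max} + \min f_F')]$ positive and also to keep both iterates inside $[0,1]^N$, and this $M$ must not depend on the initial data, since the lemma is stated for arbitrary admissible initial conditions. Obtaining the strict inequality at the first step when $x_Z^M(0) = x_F^M(0)$ with this common value having components at $0$ or $1$ is the only delicate point; this is handled by using that the Laplacian term transmits any interior positivity from a neighbour within one more step, so that after one additional iteration the Fisher iterate strictly dominates componentwise. The Fisher analogue of Lemma \ref{comparison-Zel-Euler} already shows the required invariance and monotonicity facts, so the present statement reduces to carefully tracking the $(f_F - f_Z)$ gap through the iteration.
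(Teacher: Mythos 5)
Your proposal follows essentially the same route as the paper's proof: the same splitting $f_F(x_F)-f_Z(x_Z)=[f_F(x_F)-f_F(x_Z)]+[f_F(x_Z)-f_Z(x_Z)]$, the mean value theorem on the first bracket, the pointwise inequality $f_F\ge f_Z$ on $[0,1]$ for the second, the one-step lower bound with the factor $1+\Delta t(-\epsilon n_{\max}+\min_{[0,1]}f_F')$, a time step uniform in the data, invariance of $[0,1]^N$ via the envelope solutions ${\bf 0}$, ${\bf 1}$, and induction over the steps. The only (minor) added value is that you retain the $\Delta t\,(f_F-f_Z)$ term to discuss strictness at boundary components, a point the paper glosses over by tacitly assuming strict ordering of the initial data.
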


\begin{proof}
The argument is similar to the one used for
Lemma \ref{comparison-Zel-Euler} above and some details are omitted. 
Consider the first step of the iteration for the Zeldovich
and Fisher equations, starting with respective 
initial conditions 
$ x_Z^M(0) < x_F^M(0)$ in $[0,1]^N$, 
and let $f_Z$, $f_F$ denote the Zeldovich and 
Fisher nonlinearities respectively. 
We have 
\begin{eqnarray}
x_F^M(\Delta t) - x_Z^M(\Delta t) & = & x_F^M(0) - x_Z^M(0)  + \notag \\
& & \Delta t [ \epsilon \Delta (x_F^M(0) - x_Z^M(0)) + 
f_F(x_F^M(0)) - f_Z(x_Z^M(0))]. \notag
\end{eqnarray}
From  
\begin{eqnarray}
\label{Fis-Zel-difference}
f_F(x_F^M(0)) - f_Z(x_Z^M(0)) & = & 
[f_F(x_F^M(0)) - f_F(x_Z^M(0)] + \notag \\
&   & [f_F(x_Z^M(0)) - f_Z(x_Z^M(0))],  
\end{eqnarray}
and 
$$ (1-x)x \geq (1-x)x(x - \alpha), \quad \alpha \in (0,1),  
\forall x \in [0,1],$$ 
the second expression in (\ref{Fis-Zel-difference}) is a positive vector. 
Collecting the analogues of the
(\ref{Euler-comp-step-1}) for the components
of $x_F^M(\Delta t) - x_Z^M(\Delta t )$ we 
then have
\begin{equation}
x_F^M(\Delta t) - x_Z^M(\Delta t) \geq 
[1 +  \Delta t (- \epsilon  n_{max} + 
\min_{{x}\in [0,1]} f_F'(x))]
(x_F^M(0) - x_Z^M(0)). \notag  
\end{equation}
We can then take $M$ sufficiently large and independent
of the inital conditions so that
$ {\bf 0} \leq x_F^M(\Delta t) <  x_Z^M(\Delta t) 
\leq {\bf 1}$, and repeat the argument for all  
steps. 
\end{proof}

\begin{lemma}
\label{comparison-KmK-Fis-Euler}
Let $T > 0$, and let $M > 1$ be an integer.
Let $x^M$, and $(s^M,i^M)$ be the Euler approximations with
time-step ${\Delta t} = T /M $ over the interval $[0,T]$ of the 
Fisher (\ref{fisher}) and Kendrick-McKormack (\ref{km3}) 
equations respectively.
Assume that $i^M(0)  \leq x^M(0)$, with 
$x^M(0)$ in $[0,1]^N$, $(s^M(0),i^M(0)) \in \tau^N$.
Then 
$i^M(m {\Delta t}) < x^M(m {\Delta t})$, 
$\forall m \in \{1, \ldots, M \}$, 
provided $M$ is sufficiently large.  
For such $M$ we also have 
$(s^M(m {\Delta t}),i^M(m {\Delta t})) \in \tau^N$, 
$\forall m \in \{1, \ldots, M \}$.
\end{lemma}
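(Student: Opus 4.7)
The plan mirrors the proofs of Lemmas \ref{comparison-Zel-Euler} and \ref{comparison-Zel-Fis-Euler2}: take one Euler step, estimate the componentwise difference, choose $M$ large enough that the recursion preserves the required inequalities, and then iterate with the same time step. For each node $k$ I would begin from
\begin{eqnarray}
x_k^M(\Delta t) - i_k^M(\Delta t) & = & (x_k^M(0) - i_k^M(0))  \notag \\
& & + \, \Delta t [\epsilon(\Delta(x^M(0) - i^M(0)))_k + f_F(x_k^M(0)) \notag \\
& & \quad - \gamma s_k^M(0) i_k^M(0) + \beta i_k^M(0) ], \notag
\end{eqnarray}
and split the nonlinearity difference as $[f_F(x_k^M(0)) - f_F(i_k^M(0))] + [f_F(i_k^M(0)) - \gamma s_k^M(0) i_k^M(0) + \beta i_k^M(0)]$. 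Using $(s^M(0), i^M(0)) \in \tau^N$, the second bracket equals $\gamma i_k^M(0)(1 - i_k^M(0) - s_k^M(0)) + \beta i_k^M(0) \geq 0$, while the first is handled by the mean value theorem exactly as in the earlier lemmas. Together with the bound $(\Delta(x^M(0) - i^M(0)))_k \geq -n_k(x_k^M(0) - i_k^M(0))$, which uses $x^M(0) \geq i^M(0)$ componentwise, one obtains
\begin{eqnarray}
x_k^M(\Delta t) - i_k^M(\Delta t) \geq [1 + \Delta t(-\epsilon n_{\max} + \min_{x \in [0,1]} f_F'(x))](x_k^M(0) - i_k^M(0)), \notag
\end{eqnarray}
which is nonnegative for $M$ large enough, uniformly in the initial data.

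The genuinely new point relative to the previous two lemmas is the invariance of $\tau^N$ under the Kermack-McKendrick Euler step. Nonnegativity of the components of $s^M(\Delta t)$ and $i^M(\Delta t)$ follows from the same diagonal/off-diagonal splitting used in Lemma \ref{comparison-Zel-Euler}, and is secured by requiring $1 - \Delta t(\epsilon n_{\max} + \beta + \gamma) \geq 0$. For the upper constraint I would introduce $w = s + i - {\bf 1}$ and use $\Delta {\bf 1} = {\bf 0}$ to rewrite one Euler step as
\begin{eqnarray}
w_k^M(\Delta t) = (1 - \Delta t \epsilon n_k) w_k^M(0) + \Delta t \epsilon \sum_{j \in \hbox{nbd}(k)} w_j^M(0) - \Delta t \beta i_k^M(0), \notag
\end{eqnarray}
which is nonpositive whenever $w^M(0) \leq 0$, $i_k^M(0) \geq 0$, and the diagonal coefficient is nonnegative; hence $s_k^M(\Delta t) + i_k^M(\Delta t) \leq 1$.

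Combining these three bounds, at the first step I obtain $(s^M(\Delta t), i^M(\Delta t)) \in \tau^N$ together with $i^M(\Delta t) \leq x^M(\Delta t)$, and the argument iterates with the same time step because all threshold conditions on $M$ depend only on $\Delta$, $\gamma$, $\beta$ and $\epsilon$, and not on the particular iterate. Strict inequality $i^M(m\Delta t) < x^M(m\Delta t)$ for $m \geq 1$ is obtained as in the proof of Lemma \ref{comparison-Zel-Fis-Euler2}, by noting that the comparison $f_F(i) \geq \gamma s i - \beta i$ is strict at any node where $i \in (0,1)$, with the Laplacian coupling carrying the strict inequality to the remaining nodes after a few iterations. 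I do not foresee a substantive obstacle beyond the bookkeeping needed to handle the three defining inequalities of $\tau^N$ simultaneously and to extract a single step-size threshold that serves all of them together with the comparison with $x^M$.
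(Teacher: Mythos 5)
Your proposal is correct and follows essentially the same route as the paper's proof: a one-step Euler comparison using the mean value theorem for $f_F$ together with the constraint $s+i\leq 1$ to control the Kermack--McKendrick nonlinearity, a diagonal/off-diagonal splitting of the Laplacian for nonnegativity, and the observation that $\Delta {\bf 1}=0$ reduces the invariance of $s+i\leq 1$ to a linear recursion, all with a step-size threshold independent of the iterate. Your explicit two-bracket decomposition of the nonlinearity difference and your remarks on where strict inequality comes from are slightly more detailed than the paper's write-up, but the substance is identical.
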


\begin{proof}
The argument is similar to the one used for
Lemmas \ref{comparison-Zel-Euler},  
\label{comparison-Zel-Fis-Euler}
and some details are omitted. 
Consider the first step of the iteration for the 
Kendrick-McKormack
and Fisher equations, starting with respective 
initial conditions 
$x_F^M(0)$ in $[0,1]^N$, $(s^M(0),i^M(0)) \in \tau^N$.
We have at each site $k$
\begin{eqnarray}
x_k^M(\Delta t) - i_k^M(\Delta t) & = & x_k^M(0) - i_k^M(0) + \notag \\
& &  \Delta t [ \epsilon (\Delta (x^M(0) - i^M(0)))_k + \notag \\
& &  x_k^M(0)(1-x_k^M(0)) - i_k(0) s_k(0) + \gamma i_k^M(0)]. \
\end{eqnarray}
By $s^M_n(0) + i^M_n(0) \leq 1$, we therefore have 
\begin{eqnarray}
x_k^M(0)(1-x_k^M(0)) - i_k(0) s_k(0)  & \geq &  
x_k^M(0)(1-x_k^M(0)) - i_k(0) (1 - i_k^M(0))  + \notag \\
& &  f_F'({\tilde x}_k)(x_k^M(0) -  i_k^M(0) ) 
\end{eqnarray}
for $ {\tilde x}_k $ in $[i_k^M(0), x_k^M(0)] \subset [0,1]$. 
Then we have  
\begin{equation}
\label{fis-KmK-difference-1}
i^M(\Delta t) - x^M(\Delta t) \geq 
[1 + \Delta t (- \epsilon  n_{max} + 
\min_{{x}\in [0,1]} f_F'(x))]
(x^M(0) - i^M(0)),
\end{equation}
and therefore 
$i^M( {\Delta t}) < x^M({\Delta t})$
for $M $ sufficiently large and independent 
of the initial conditions.
 
A similar argument is used to show that $s^M(\Delta t)$, 
$i^M(\Delta t) > 0$ for $M$ sufficiently large and independent 
of the initial conditions, and we omit the details. 

By Lemma \ref{comparison-Zel-Euler}
similarly have $ x^M (\Delta t) \in [0,1]^N $ 
for $M$ sufficiently large
and independent of the initial conditions.
Finally adding the Euler formulas for $s^M({\Delta t})$, 
$i^M({\Delta t})$, and using $\Delta {\bf  1} = 0$, we have
\begin{eqnarray}
\label{fis-KmK-difference-2}
{\bf 1 } - (s^M({\Delta t}) + i^M({\Delta t})) & =  & 
{\bf 1 } - (s^M(0) + i^M(0) + \notag \\
& & {\Delta t}[\epsilon ({\bf 1 } - (s^M(0) + i^M(0)) 
+ \gamma   i^M(0)  ] \notag \\
& \geq & (1 - {\Delta t} \epsilon d_{max})[{\bf 1 } - (s^M(0) + i^M(0)],  
\end{eqnarray}
which is positive for $M$ sufficiently large
and independent of the initial conditions. 
We can then iterate the argument for the remaining steps. 
\end{proof}

The fact that the convergence 
of the approximate solutions 
$x^M$ of the Euler method to the trajectory $x$ in 
Lemma \ref{Euler-conv} preserves the 
partial order follows from the following.

Let  
$\{g^M\} = \{ g_M \}^{\infty}_{M = 1}$, 
denote a sequence 
of arrays 
(of increasing size $M +1$) 
$g^M = (g^M_0, \ldots, g^M_M)$ of vectors 
$g_m^M \in {\bf R}^K$, $m = 1, \ldots, M$.
Let  $ \{g^M\} \rightarrow g $ denote that  
\begin{equation}
\label{unif-conv}
\lim_{M \rightarrow \infty} 
\max_{m \in \{0, \ldots, M\} } 
||g^M_m  - g( m ({T / M}) )|| = 0,
\end{equation}
where  
$g:[0,T] \rightarrow {\bf R}^K$, 
and $|| \cdot ||$ is the norm in 
${\bf R}^K$.

\begin{lemma}
\label{monotone-limits}
Consider sequences
$\{ g^M \}$, $\{h^M \}$ as above satisfying that
for all $M $ sufficiently large
we have that $g^M_m < h^M_m$, 
$\forall m \in \{0, \ldots, M\}$. 
Suppose also that there exist 
continuous functions $g$, $h:[0,T]\rightarrow {\bf R}^K$ 
for which 
$\{ g^M \} \rightarrow g$, and $\{h^M \} \rightarrow h$
respectively. Then $g(t) \leq h(t)$, 
$\forall t \in [0,T]$.
\end{lemma}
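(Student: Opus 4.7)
The plan is to pass a strict componentwise inequality valid along a discrete sequence to a (necessarily only) weak componentwise inequality in the continuum limit, by picking for each $M$ an index $m(M)$ whose associated time $m(M) T/M$ approximates a given $t \in [0,T]$, and then using both the uniform convergence hypothesis and the continuity of the limit functions.

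First I would fix $t \in [0,T]$ and choose, for each positive integer $M$, an index $m(M) \in \{0, \ldots, M\}$ such that $m(M) T/M \to t$ as $M \to \infty$, for instance $m(M) = \lfloor tM/T \rfloor$ (and $m(M) = M$ for $t = T$). By the standing hypothesis there is an $M_0$ such that for $M \geq M_0$ we have $g^M_{m(M)} < h^M_{m(M)}$, and in particular $g^M_{m(M)} \leq h^M_{m(M)}$ componentwise.

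Next I would show that $g^M_{m(M)} \to g(t)$ and $h^M_{m(M)} \to h(t)$ as $M \to \infty$. For the first, I would use the triangle inequality
\begin{equation}
\| g^M_{m(M)} - g(t) \| \leq \| g^M_{m(M)} - g(m(M) T/M) \| + \| g(m(M) T/M) - g(t) \|, \notag
\end{equation}
observe that the first term is bounded above by $\max_{m \in \{0, \ldots, M\}} \| g^M_m - g(m T/M) \|$ which tends to $0$ by (\ref{unif-conv}), and that the second term tends to $0$ by continuity of $g$ at $t$. The same argument applies to $\{h^M\}$ and $h$.

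Finally, since componentwise weak inequality is preserved under limits in $\mathbb{R}^K$ (each coordinate projection is continuous, and $\leq$ on $\mathbb{R}$ is closed), passing to the limit in $g^M_{m(M)} \leq h^M_{m(M)}$ yields $g(t) \leq h(t)$. Since $t \in [0,T]$ was arbitrary, this completes the proof. There is no real obstacle here; the only mild subtlety is the combined use of uniform convergence of the arrays together with continuity of the limit, which is needed precisely because the index $m(M)$ selecting the approximating point varies with $M$ and $m(M) T/M$ is generally not equal to $t$.
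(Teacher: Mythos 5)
Your proof is correct and rests on the same two ingredients as the paper's: the uniform convergence of the arrays in (\ref{unif-conv}) and the continuity of the limit functions. The paper merely phrases the argument as a contradiction (if $g(t_0)>h(t_0)$ in some component, continuity and the convergence would force $g^M_m > h^M_m$ for some large $M$), whereas you argue directly by extracting indices $m(M)$ with $m(M)T/M\to t$ and passing the weak inequality to the limit; this is the same mechanism, written out more carefully.
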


\begin{proof}
The statement follows from continuity of $h - g$ 
in $[0,T]$, since it is easy to see that $h(t_0) - g(t_0) > 0$ for some 
$t_0  \in (0,T]$ and the convergence leads to a contradiction. 
\end{proof}

We now prove Proposition \ref{compare-Zel-Zel}. 

\begin{proof}
Combining 
the comparison of approximate solutions  
produced by the Euler method 
Lemma \ref{comparison-Zel-Euler},
with the approximation 
Lemmas \ref{Euler-meth-lem}, \ref{monotone-limits} 
we have that $x(0) < y(0)$ implies 
$x(t) \leq y(t)$, $\forall t \in [0,T]$. 
To show the strict inequality we use the fact that the 
evolution can be also defined uniquely also backwards in time. 
Thus $x(t) \leq y(t)$, for some $t \in (0,T]$ leads
to a contradiction.  
\end{proof}

Propositions \ref{compare-Zel-Fis}, 
\ref{compare-Fis-KmK} follow in the same way, 
but we can not apply the backwards evolution argument,
and do not have strict inequality. 

\section{Numerical results on front propagation}

In the first part of this section we solve numerically 
the Zeldovich system (\ref{zeldovich}) for initial conditions 
that are near the computed static solutions. We confirm
the results of Section 3 on thresholds, and examine the evolution 
of the front-like initial conditions 
for couplings that are above the threshold.
In the second part we verify some of the 
predictions of the comparison results of Section 4. One main 
observation is that the propagation 
of the Fisher, and Kermack-McKendrick models is much faster
than the ones seen in the Zeldovich models. 
A third part examines the propagation of Zeldovich fronts for larger
values of the local excitation threshold $ a $. We 
can then observe very rapid front propagation. 
In all simulations below  
we use the variable step 5-6 dopri5 solver of Hairer and 
Norsett \cite{Hairer} in double precision with a relative tolerance 
of $10^{-10}$.

\subsection{Time evolution of the Zeldovich fronts}

Branches of static solutions $x(\epsilon)$ are labeled by 
the corresponding $\epsilon \rightarrow 0$ limit $x(0)$, obtained 
by decreasing $\epsilon$. The value of $\epsilon$ at the fold 
is $\epsilon_0(x(0))$ (or simply $\epsilon_0$ when the branch in question 
is clear).  
In addition to the numerical and theoretical $\epsilon_0$ values 
from Section 3, we here obtain a third estimate of $\epsilon_0$ by 
integrating (\ref{zeldovich}) starting with $\epsilon < \epsilon_0$ 
and increasing $\epsilon$ slowly on each run. 
The typical behavior is the following.  For small $\epsilon$ we 
always find a static solution. As $\epsilon$ is increased past 
a threshold $\epsilon_0$, the solution destabilizes and gives rise to the 
homogeneous flat state $[1, \dots, 1]$. This estimated 
$\epsilon_0$ is between the 
largest $\epsilon$ leading to convergence to a similar static front, 
and the smallest $\epsilon_0$ leading to a trajectory that diverges 
from the front. The three estimates of $\epsilon_0$
are given in Table \ref{tab1} for some examples, and confirm the results
expected from Section 3.

We now present some examples of the evolution slightly above 
the $\epsilon_0$ for different configurations.  We examine
how the connectivity of a node influences the destabilization of a
front centered at that node.


In the first example 
we consider the evolution from an initial condition near 
a static front localized at node 6 which (with connectivity 1).
The front belongs to the branch of $[0,0,0,0,0,1]^T$, which is expected 
to be last to be destabilized.
We use $ \epsilon=3.0 ~ 10^{-3}$, 
slightly above the computed threshold $\epsilon_0 = 2.998 ~ 10^{-3}$,  
and the initial condition 
\begin{equation}
\label{node6-initcond}
[1.536 ~ 10^{-3},  8.680 ~ 10^{-5} , 1.537 ~ 10^{-3},  1.578 ~ 10^{-3},
5.350 ~ 10^{-2}, 0.997]^T.
\end{equation}
Notice how it decays very rapidly from the node 5 to the nodes 
1 and 4 then node 3. 
The evolution is shown in Fig. \ref{znode6}, we see that the wave goes 
successively from 5 ,4 , 3 ,1 and 2.
\begin{figure} [H]
\centerline{
\epsfig{file=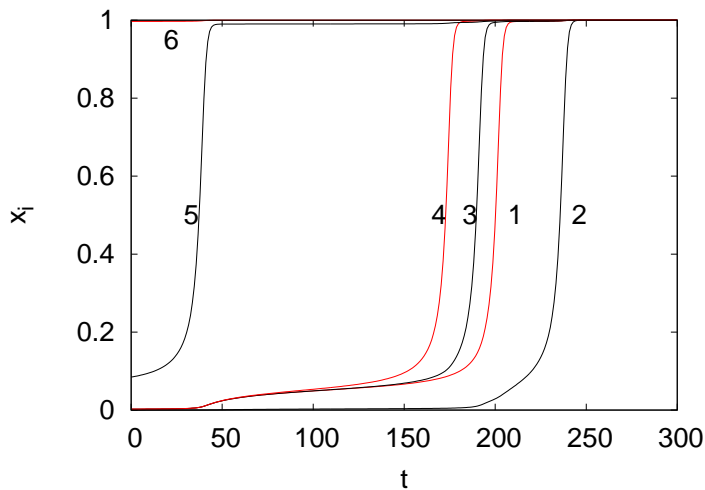,width=0.8\linewidth,angle=0}
}
\caption{Time evolution of the different nodes for an initial front
solution centered on node 6, of connectivity 1, with  
$\epsilon = 3.0 ~ 10^{-3}$ and the 
initial condition \ref{node6-initcond}. 
}
\label{znode6}
\end{figure}


In the second example we consider an initial condition
near the static solutions of the branch $[0,1,0,0,0,0]^T$.
We use $ \epsilon= 2.6875 ~ 10^{-3}$, 
which is slightly above computed $ \epsilon_0 = 2.547 ~ 10^{-3}$ 
for the branch, 
and the initial condition 
\begin{equation}
\label{node2-initcond}
[5.088 ~ 10^{-2},  0.994, 4.298 ~ 10^{-2},   1.165 ~ 10^{-3},  
2.35 ~ 10^{-3}, 6.132 ~ 10^{-5}]. 
\end{equation}
The value $ x_1 = 5.088 ~ 10^{-2}$ is very close 
to $a/2$ which is the value observed by the
continuation method for $\epsilon=\epsilon_c$. 
The evolution is shown in Fig. \ref{znode2}.
The solution destabilizes following the fixed point so $x_1$ and $x_3$
remain close to $a/2$ for a long time before going to 1. 
We see that the wave follows the
connectivity as it propagates from node 1 (3) to node 3 (4). Then  node
5 (4) destabilizes 
and finally node 4. Node 6 is just destabilizing for $t=900$. There are then
different time scales in the dynamics depending on the connectivity.

We also note that $\epsilon $ is greater than the threshold 
$\epsilon_0 = 2.5 ~ 10^{-3}$ for the branch $[1,1,1,1,1,0]^T$ (see Section 3), 
this means that the front will not stop at node 5, it will also 
destabilize node 6. 
\begin{figure} [H]
\centerline{
\epsfig{file=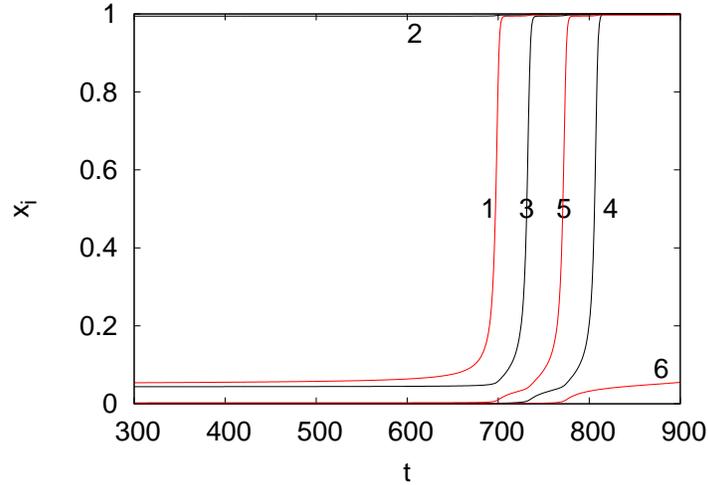,width=0.8\linewidth,angle=0}
}
\caption{Time evolution of the different nodes for an initial front
solution centered on node 2, of connectivity 2, 
$\epsilon = 2.6875 ~ 10^{-3}$, initial condition in \ref{node2-initcond}. 
 }
\label{znode2}
\end{figure}


We now consider an initial condition centered on node 3,
near static solutions of the branch $[0,0,1,0,0,0]^T$.
We use $ \epsilon= 2.54 ~ 10^{-3} $, 
which is slightly above the computed threshold 
$ \epsilon_0 = 2.528 ~ 10^{-3} $ for the branch, 
and the initial condition 
\begin{equation}
\label{exp3-initcond}
[4.946~ 10^{-2},   5.404 ~ 10^{-2}, 0.989,       4.785~ 10^{-2},
4.206~ 10^{-2}, 1.05410^{-3} ].
\end{equation}
The evolution given in Fig. \ref{znode3} shows that
the front centered on node 3 of connectivity 4 destabilizes in the
same way as the one centered on node 2
except that now nodes 2,1,5 and 4 have values around $a/2$ for a long time. 
Node 6 will destabilize after a long time. As in the previous 
example $\epsilon $ is greater than the threshold $\epsilon_0 = 2.5 ~10^{-3}$
for the branch $[1,1,1,1,1,0]^T$. 
\begin{figure} [H]
\centerline{
\epsfig{file=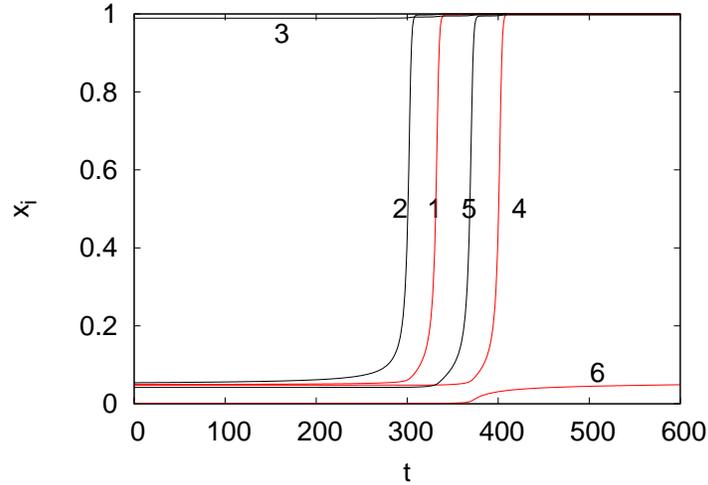,width=0.8\linewidth,angle=0}
}
\caption{Time evolution of the different nodes for an initial front
solution centered on node 3, of connectivity 4,
$ \epsilon= 2.54 ~ 10^{-3}$, initial condition in \ref{exp3-initcond}.}
\label{znode3}
\end{figure}


We now consider initial conditions near 
static solutions of the branch $[1,1,1,0,0,0]^T$, 
see Fig. \ref{z123}. 
We use $\epsilon = 1.32~ 10^{-3}$, 
slighly above the critical $\epsilon_0 = 1.3103 ~10^{-3}$ 
(see Section 4), and the intial condition
\begin{equation}
\label{node123-initcond} 
[0.999  , 0.99999, 0.997 , 1.610 ~ 10^{-2} ,  4.980 ~ 10^{-2} ,  
6.485 ~ 10^{-4}]^T 
\end{equation}
The evolution is shown in Fig. \ref{z123}.
Node 5 is the first to destabilize, followed by node 4.
We also see that node 6 remains at 
its level because $\epsilon$ is smaller than the threshold $\epsilon_0$
for the static front of the type $[1 ,1 ,1,1 ,1 ,0]^T$.
\begin{figure} [H]
\centerline{
\epsfig{file=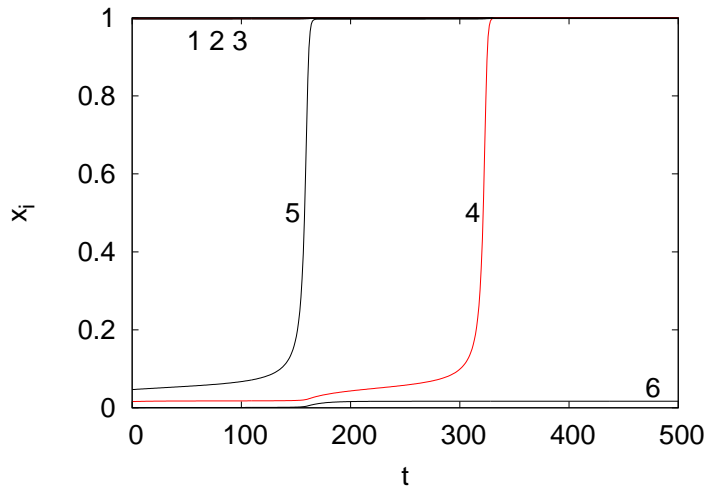,width=0.8\linewidth,angle=0}
}
\caption{Time evolution of the different nodes for an 
initial front solution of the type $[1,1,1,0,0,0]^T$, 
with $\epsilon = 1.32 ~ 10^{-3}$ 
and the initial condition \ref{node123-initcond}.
}
\label{z123}
\end{figure}

One can estimate the time for $x_5$ to grow, using the normal form displayed in
Fig. \ref{conti}  as a function of $\delta=\epsilon-\epsilon_c$. It gives
$${\dot x_5} = x_5^2 + \delta ,$$
so that
\begin{equation}
\label{center-eq}
x_5(t) = \sqrt{\delta}\tan( \sqrt{\delta}  t) ,
\end{equation}
which grows as $1/ \sqrt{\delta}$
We have 
$$\epsilon_c=1.3 10^{-3}, ~ \epsilon=1.4 10^{-3}, \delta = 10^{-4},~
1/\sqrt{\delta} = 100 .$$
From Fig. \ref{z123} one sees that the typical time of destabilization 
of $x_5$ is about 100 so the estimate is correct.

These results confirm that generalized static fronts exist for small 
$\epsilon$ and disappear for $\epsilon>\epsilon_c$; they 
are summarized in Table \ref{tab1}. 
\begin{table} \label{tab1}
\begin{tabular}
{|l | c | c | c |c | r|}
  \hline
connectivity &  node  & branch           &  $\epsilon_0$  &  $\epsilon_0$  & expression (\ref{epsofa}) \\ 
             &        &                  &  from time evolution&  continuation  &     \\ \hline
1            &    6   &  $(0 0 0 0 0 1)$ & $3.0~ 10^{-3}$  &   $2.998~ 10^{-3}$ & $2.97 ~10^{-3}$   \\ \hline
2            &    2   &  $(0 1 0 0 0 0)$ & $ 2.7~ 10^{-3}$ &   X  & $2.79 10^{-3}$ \\ \hline
4            &    3   &  $(0 0 1 0 0 0)$ & $2.54~ 10^{-3}$  &  X   &   $2.63 10^{-3}$ \\ \hline
2            &  1 2 3 &  $(1 1 1 0 0 0)$ & $1.31~ 10^{-3}$ & $1.310~ 10^{-3}$ &     $1.32 ~10^{-3}$  \\     \hline
 \end{tabular}
\caption{Critical $\epsilon$ for the "generalized front" to destabilize for 
different initial conditions.} 
\end{table}
The above examples also suggest a qualitative picture of the 
propagation of fronts, where one can use the 
analytical expresion (\ref{epsofa}) for $\epsilon_0$
to guess the order in which the differerent nodes 
are excited. It appears that given a
configuration of excited sites, the next site is the one in the 
neighborhood of the configuration that has the 
largest number of connections with the configuration connections. 
In the case where we have more than one such sites, 
the one that has the fewest connections, see e.g. the example  
of Fig. \ref{znode6}. 
This rule is consistent with the calculation of the 
smallest $\epsilon_0$ values from (\ref{epsofa}) among the 
possible $n_c$ in the vicinity of a configuration.
This rule does not include all posibilities, but it points to 
a possible connection between the $\epsilon_0$ for the various 
branches, and the propagation of the front.  
An estimation of $ \delta = \epsilon -  \epsilon_0$ leads to an approximate 
time for the site $n_c$ to be excited, using (\ref{center-eq}).

\subsection{Comparison between different solutions and front propagation 
models}


To illustrate the comparison of two initial conditions under 
the Zeldovich evolutions, we  
show solutions from initial conditions $[1,1,1,0,0,0]^T$ and
$[1,1,0,0,0,0]^T$ respectively. We use $\epsilon =1.4 ~ 10^{-3}$. 
The time evolution is indicated 
Fig. \ref{cz} where the nodes 4 and 5 are shown. 
The trajectories increase faster for the first initial 
condition than for the second.
\begin{figure} [H]
\centerline{
\epsfig{file=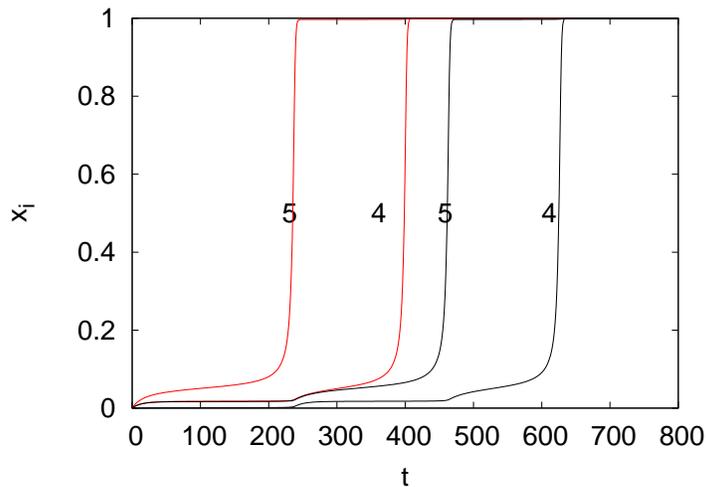,width=0.8\linewidth,angle=0}
}
\caption{Time evolution of nodes 4 and 5 for the initial fronts
$[1,1,1,0,0,0]^T$ in continuous line (red online) and 
$[1,1,0,0,0,0]^T$ in dashed line.}
\label{cz}
\end{figure}


To illustrate the comparison between trajectories of 
the Fisher and Zeldovich equations 
we use the initial condition $[1,1,0,0,0,0]^T$, 
with $\epsilon=1.4 ~10^{-3}$. It is presented 
in Fig. \ref{cfz}.
Note that the scale in time is much shorter than in Fig. \ref{cz} , here
for $t=20$
the front has invaded the graph. Therefore the Fisher solution
will always be larger than the Zeldovich one. Also the profile is different
since there are no fixed points other than the flat 1 homogeneous
state.
\begin{figure} [H]
\centerline{
\epsfig{file=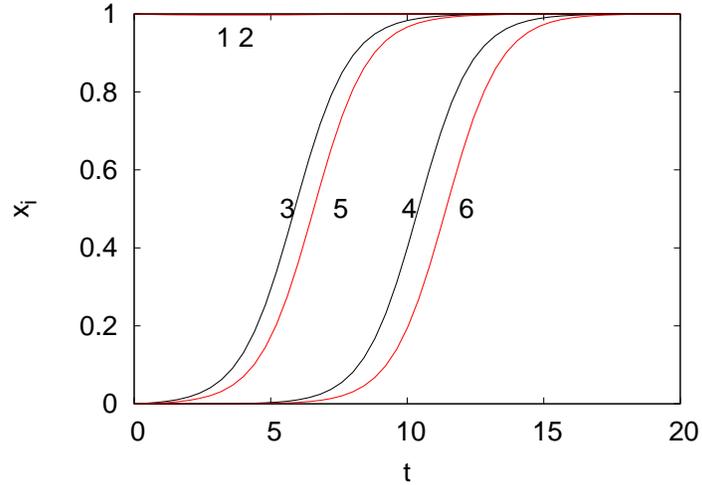,width=0.8\linewidth,angle=0}
}
\caption{Time evolution of the initial front
$[1,1,0,0,0,0]^T$.} 
\label{cfz}
\end{figure}


We also consider the evolution of the Kermack-McKendrick
model (\ref{km3}).
When the  decay term $\beta$ for the infected component $i$ is
zero, the evolution of $i$ is identical to the one of the
Fisher model (\ref{fisher}). This is because (\ref{km3}) conserves
$s+i$. For example taking as initial condition 
$$ 
s = [0, 0, 1, 1,  1,   1]^T, \quad i=[1, 1, 0, 0,  0,   0]^T $$ 
yields exactly the same dynamics for $i$ as the one of Fig. \ref{cfz}.
On the other hand, if we choose $s+i <1$ and still the same 
initial $i$, then the trajectories of (\ref{km3}) are below the
ones of (\ref{fisher}). Nevertheless the characteristic time for
the orbits of (\ref{km3}) to reach saturation is the same as for (\ref{fisher}).
When $\beta>0$ is small, the infected component reaches a maximum in  this
characteristic time and then decays over a time scale $1/\beta$
Fig.  \ref{kmk001}
shows the evolution of the infected component for $\beta=0.01$
and $\epsilon = 1.35 ~10^{-3}$
To see propagation on the network, $\beta$ should be smaller than 
the diffusion time $1/\epsilon$.
\begin{figure} [H]
\centerline{
\epsfig{file=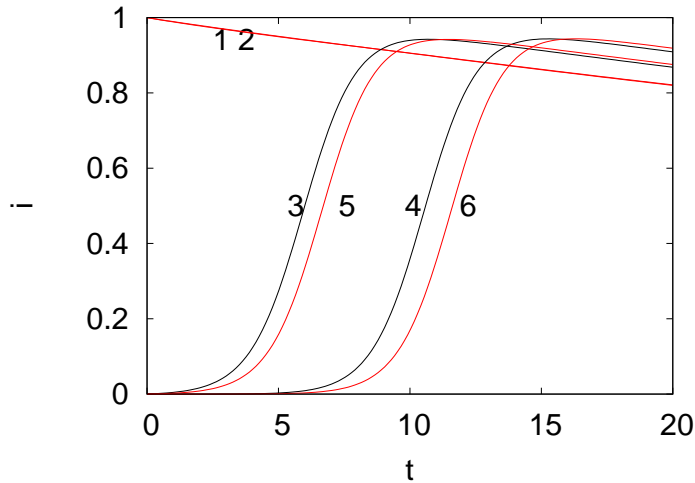,width=0.8\linewidth,angle=0}
}
\caption{Time evolution of the infected component $i$ for the initial front
$s = [0, 0, 1, 1,  1,   1]^T,~i=[1, 1, 0, 0,  0,   0]^T$ for the
Kermack-Mackendrick model (\ref{km3}) with $\beta=0.01$.
}
\label{kmk001}
\end{figure}
The comparisons between the Zeldovich models on the one hand, and 
the Fisher, and the Kermack-McKendrick
models show that the later two lead to a much faster propagation. 
This makes the comparison between the Fisher, and Kermack-McKendrick 
models a more interesting result.

The examples above suggest also that   
the order in which the different nodes become excited
in the three models is the same. 
This order seems to depend only on the geometry  
of the graph.
It may be possible to use different (possibly 
branch or site dependent) parameters 
$\epsilon$, $\gamma$, and $a$ 
for the Zeldovich and Fisher systems to make the 
propagation speeds comparable.

 \subsection{Influence of the parameter $a$}

To conclude this numerical section, we consider how the fixed points of
the Zeldovich equation and its dynamical solutions depend on the parameter
$a$. To illustrate how $a$ changes the fixed point and it's subsequent 
destabilization, we consider the front centered on node 6 of the
type $[0 0 0 0 0 1]^T$. For $a=0.3$ and $\epsilon = 1.25 ~ 10^{-3}$
we obtain the static front 
\be\label{a03_node6_stat}
[0.104,  5.477~10^{-2} , 0.107,  0.124, 0.295,  0.852]^T.
\ee
Compared to the one for $a =0.1$ (\ref{node6-initcond}), this
front is much broader. Here we see that $x_5 \approx aĥ$ and
$x_3,x_4$ and $x_1$ are close to $a/2$. 
\begin{figure} [H]
\centerline{
\epsfig{file=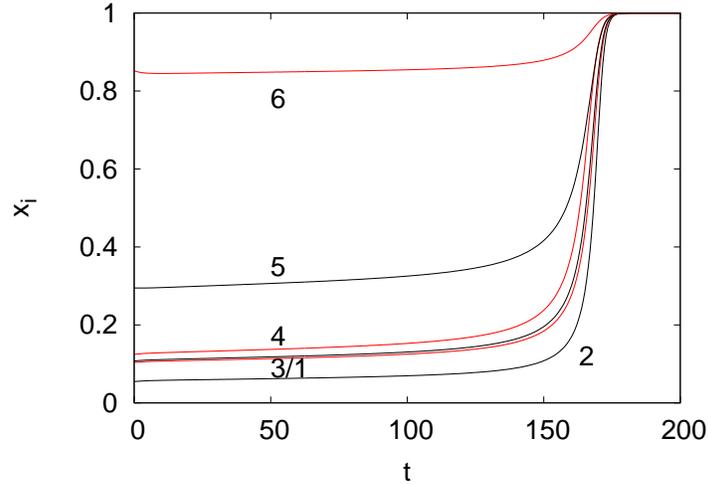,width=0.8\linewidth,angle=0}
}
\caption{Time evolution for the initial condition 
$(0.1036,  5.477~ 10^{-2},  0.107 , 0.124, 0.294, 0.851 )$ 
for $\epsilon = 1.3 ~ 10^{-1}$ close to the critical value
$\epsilon_0 = 1.25 ~ 10^{-1}$ for the Zeldovich equation.
The parameter $a=0.3$. 
}
\label{a03_6}
\end{figure}
The time evolution of the initial condition (\ref{a03_node6_stat})
is presented in Fig. \ref{a03_6}. Note the large velocity with which
the front "invades" the network. For $a=0.1$, in Fig. \ref{znode6}
we had a well separated dynamics of node 5 which destabilized first.
Here we cannot distinguish the evolution of node 5 from the one of
the other nodes. Since the front is much wider, it averages out the
network and propagates much faster.

Because the front becomes very wide, the 
formula (\ref{epsofa}) will underestimate the
critical $\epsilon$. 
Table \ref{tab2} shows $\epsilon_0$ for $a=0.1,~0.2 $ and $0.3$
obtained for the static solution centered on node 6.
\begin{table} \label{tab2}
\begin{tabular}
{|l | c | r|}
  \hline
$a$   &     $\epsilon_0$  &      expression (\ref{epsofa}) \\ \hline
0.1   & $3.~ 10^{-3}$     &        $2.8 ~ 10^{-3}$  \\ \hline
0.2   & $1.6 ~ 10^{-2} $  & $1.28 ~ 10^{-2} $    \\ \hline
0.3   & $ 1.25~ 10^{-1} $ & $3.48 ~ 10^{-2} $\\ \hline
 \end{tabular}
\caption{Critical $\epsilon$ for the "generalized front" centered on node 6 
to destabilize for different values of $a$. }
\end{table}
As expected (\ref{epsofa}) underestimates $\epsilon_0$ as $a$ increases. It 
gives the right order of magnitude for $a=0.2$ but is clearly wrong for
$a=0.3$.

\section{Conclusion}

We studied analytically and numerically 
a bistable reaction diffusion on an arbitrary finite network. 
We show that stable static fronts 
exist everywhere on the network for small diffusivity. 
We give the asymptotics of these fixed points and derive from them
a simple depinning criterion which is validated both by continuation
techniques and by solving the time dependent problem.
The justification of the depinning criterion is an open problem, 
and may be related to the small value of the local excitation parameter 
$a$.
The numerical simulations suggest that the moving front "feels"
the different static configurations, as it travels accross the network.

We also compare different solutions of the Zeldovich model and show 
how "large" fronts dominate "small" fronts in the 
dynamics. The time dependent solutions of the Fisher and Kermack-Mckendrick
original models are compared to the ones of the Zeldovich; they have
a much shorter time scale and no treshold. This effect might
be expected from the instability of the origin in 
the Fisher and Kermack-Mckendrick models.
This seems to reduce their interest as opposed to the Zeldovich model.
On the other hand all three models describe qualitatively
similar front expansion scenarios above the Zeldovich threshold.
Another posibility is that 
The behavior of the Zeldovich model
below the highest branch threshold may reflect some pinning phenomena 
related to epidemics.  

Finally we investigate numerically 
larger local excitation thresholds and show that fronts 
become wider and travel much faster across the network.

{\bf Acknowledgements}  \\
J.G. C. thanks the Universidad Nacional Aut\'onoma de M\'exico
for its hospitality during two visits.
The work of J.G. C. is supported partially by a grant from
the Grand Reseau de Recherche, Transport Logistique et Information
of the Haute-Normandie region. 
The authors acknowledge the Centre de Ressources Informatiques de Haute
Normandie for the computations together with Ana Perez and Ramiro Chavez 
from IIMAS UNAM for technical support.


\begin{thebibliography}{99}

\bibitem{scott}
A. C. Scott
``Nonlinear science, emergence and dynamics of coherent structures'',
 Oxford University Press (2003).

\bibitem{nabarro} F. R. N. Nabarro, "Dislocations in a simple cubic lattice",
Proc. Roy. Soc. London, {\bf 59}, 256-272, (1947).


\bibitem{cks11} 
J.-G. Caputo, A. Knippel and E. Simo, "Oscillations of simple networks", 
J. Phys. A: Math. Theor. 46, 035100 (2013) \\
http://arxiv.org/abs/1109.3071

\bibitem{CB03} A. Carpio and L. Bonilla , "Depinning transitions
in discrete reaction-diffusion equations", SIAM J. Appl. Math.
63, 1056-1082, (2003).

\bibitem{CDEMLPAV09}
G. Cruz-Pacheco, L. Duran, L. Esteva, A.A. Minzoni, M.Lopez-Cervantes, 
P. Panayotaros, A. Ahued-Ortega, I. Villase\~nor Ruiz, 
Modelling of the influenza A(H1N1) outbreak in Mexico City, 
April-May 2009, with control measures, Eurosurveillance 14, 26 (2009)


\bibitem{minoux} M. Gondran and M. Minoux, "Graphs and Algorithms", 
John Wiley and Sons, (1984).



\bibitem{EN93} T. Erneux, G. Nicolis, 
Propagating fronts in discrete bistable reaction-diffusion
system, Physica D 67, 237-244 (1993)

\bibitem{HM10} A. Hoffman and J. Mallet-Paret, "Universality of 
Crystallographic Pinning", J. Dyn. Diff. Equat.  22, 79-119, (2010).


\bibitem{crs01} D. Cvetkovic, P. Rowlinson and S. Simic,
 "An Introduction to the Theory of
Graph Spectra",  London Mathematical Society Student Texts (No. 75), (2001).

\bibitem{km27}
W. O. Kermack and A. G. McKendrick,
``A Contribution to the Mathematical Theory of Epidemics'',
Proc. Roy. Soc. Lond. A 115, 700-721 (1927).


\bibitem{Hairer}  E. Hairer, S. P. Norsett and G. Wanner. {\it Solving
ordinary differential equations I}, Springer-Verlag, (1987).

\bibitem{I08} A. Iserles, 
{\it First Course in the Numerical Analysis of Differential Equations, 
2nd Edition},
Cambridge University Press
Cambridge (2008)

\bibitem{keener87}
J.P. Keener, Propagation and its failure 
in coupled systems of discrete excitable cells, 
SIAM J. of Appl. Math. 47, 556-572 (1987)


\bibitem{K77}
H.B. Keller, Numerical solution of bifurcation and nonlinear 
eigenvalue problems, 
in P. H. Rabinowitz, editor, Applications of Bifurcation Theory, 
Academic Press (1977)


\bibitem{MA94} R.S. MacKay, S. Aubry: Proof of existence of breathers for
time-reversible or Hamiltonian networks of weakly coupled
oscillators, Nonlinearity  7, 1623-1643 (1994)

\bibitem{P70} 
M.J.D. Powell, 
A hybrid method for nonlinear equations, in 
Numerical methods for nonlinear algebraic equations, 
P. Rabinowitz, ed., Gordon and Breach, New York (1970) 



\bibitem{Z86} E. Zeidler, Nonlinear Funcional Analysis and
its Applications I, Springer, New York (1986)

\bibitem{matlab} The Mathworks \\
http://www.mathworks.com

\end{thebibliography}
\end{document}